\documentclass[letterpaper, 10 pt, conference]{ieeeconf}
\IEEEoverridecommandlockouts                              
\usepackage{cite}
\usepackage{amsmath,amssymb,amsfonts}
\usepackage{graphicx}
\usepackage{float}

\makeatletter
\let\NAT@parse\undefined
\makeatother
\usepackage[hidelinks]{hyperref}
\usepackage{textcomp}

\usepackage{dq_macros}
\usepackage{booktabs}
\usepackage{array}      

\usepackage[dvipsnames]{xcolor}
\usepackage[normalem]{ulem}
\usepackage[colorinlistoftodos]{todonotes}  

\newcommand{\kamedit}[1]{%
\textcolor{red}{#1}}

\usepackage{algorithm, algpseudocodex}

\algrenewcommand\algorithmicprocedure{\textbf{method}}
\algdef{SE}[CLASS]{Class}{EndClass}[1]%
  {\textbf{class} \textsc{#1}}%
  {}

\usepackage{circuitikz} 
\usepackage[capitalise]{cleveref} 

\begin{document}
\title{\LARGE \bf Observability Conditions and Filter Design for Visual Pose Estimation via Dual Quaternions$^*$}
\author{Nicholas B. Andrews$^{1}$ and Kristi A. Morgansen$^{1}$%
\thanks{$^*$This work was supported in part by Blue Origin Enterprises, L.P.}
\thanks{$^{1}$Department of Aeronautics and Astronautics, University of Washington, Seattle, WA 98195 USA  
{\tt\small [{\href{mailto:nian6018@uw.edu}{nian6018},  \href{mailto:morgansn@uw.edu}{morgansn}}]@uw.edu}}}

\maketitle
\thispagestyle{empty}
\pagestyle{empty}

\maketitle

\begin{abstract}
This paper presents a dual quaternion framework for 6-DOF visual target tracking that addresses key limitations of perspective-n-point (\pnp) solvers: sensitivity to noise and outliers, and inability to propagate estimates through measurement dropouts. A nonlinear observability analysis is performed using a Lie algebraic approach, deriving sufficient conditions for local observability under two sensing modalities: relative position vector and unit vector measurements. For the unit vector case, the classical collinear feature point degeneracy of the perspective-three-point problem is recovered through rank analysis of the observability codistribution matrix, providing a control-theoretic interpretation of a previously geometric result. A dual quaternion Lie group unscented Kalman filter is then developed, directly modeling relative dynamics without assumptions about cooperative measurements or slowly-varying motion. Simulations demonstrate improved pose estimation accuracy and robustness to occlusions compared to an off-the-shelf \pnp solver. Results are broadly applicable to visual-inertial navigation, simultaneous localization and mapping, and \pnp solver development.
\end{abstract}

\section{Introduction}
\label{sec:introduction}

{Accurate} state estimation is essential for the safe and efficient operation of autonomous robotic systems, including spacecraft, underwater vehicles, and aircraft. Tasks such as manipulating, docking, and avoiding obstacles critically depend on precise relative pose (position and orientation) information for guidance, navigation, and control. Although autonomous systems typically utilize multiple sensing modalities, the availability and decreasing cost of high-performance cameras, combined with advances in computer vision and onboard processing, have driven increased adoption of vision-based sensing in closed-loop control systems.

Central to vision-based pose estimation is the Perspective-$n$-Point (\pnp) problem: estimating the pose of a camera relative to $n$ known 3-D feature points whose projections are observed in the 2-D image plane. While typically formulated in the context of computer vision and cameras, the \pnp framework is mathematically equivalent to bearing only or unit vector position measurements that arise in applications using acoustic or radio frequency sensors for localization. Numerous algorithms have been developed to find solutions to the \pnp problem \cite{Lepetit2009-so, Ke2017-rn, terzakis2020}, offering practical ``off-the-shelf'' utility due to their simplicity of implementation and widespread adoption in computer vision libraries. However, classical \pnp solvers exhibit two key limitations for target tracking applications. First, they are deterministic and inherently sensitive to measurement noise and outliers. Second, they are memoryless: they do not incorporate knowledge of rigid body motion, rely heavily on consistent feature matching and outlier rejection, and cannot propagate state estimates through measurement dropouts or occlusions.

To address these shortcomings, practitioners often augment \pnp solvers with sequential stochastic state estimators, such as Kalman and particle filters, which leverage physics-informed motion models. These methods have demonstrated reduced estimation error and greater robustness to noise and outliers, even with fewer correlated feature points \cite{Sveier2021-ev, Mehralian2020-ns}. While reducing the number of required features alleviates some computational burden, the filtering algorithms themselves introduce additional overhead that may be prohibitive for platforms with limited onboard compute.

The limitations of traditional \pnp solvers and advantageous properties of sequential stochastic state estimators motivate continued research in the development of pose estimation frameworks that exploit both the geometric constraints of the \pnp problem and the rigid body physics. Dual quaternions provide a particularly attractive mathematical foundation for such frameworks, offering a compact, singularity-free representation of \se states that unifies rotation and translation within a single algebraic structure. Dual quaternions have been shown to be the most efficient representation to perform basic pose transformations in terms of storage requirements and number of operations~\cite{Funda1990}. Moreover, many dual quaternion operations admit a matrix representation, enabling the application of linear algebraic tools and simplifying the analysis of 6-DOF systems \cite{Andrews2024-jk}. Due to their advantageous computation and modeling properties, dual quaternions have been used for controller \cite{Reynolds2018-ou, Filipe2015-zc} and estimator \cite{Hudoba-de-Badyn2025-bk, Razgus2017-zq, Filipe2015-ep, Zivan2022-pw, Sveier2021-ev} design. 

The primary motivation for the work here is a comprehensive treatment of the modeling, nonlinear observability analysis, and sequential state estimator design for 6-DOF visual target tracking approached entirely through the use of dual quaternions. In this work{, no assumptions are made} about additional measurement sources (inertial measurement units, gyroscopes, etc.) or the cooperative nature between the target and the camera. 
By making as few assumptions as possible about additional measurement types, relative motion dynamics, and cooperation, the results from this work{ are} broadly applicable and generalizable to applications such as visual-inertial navigation, simultaneous localization and mapping (SLAM), and \pnp solver algorithms.

The analytical observability of the \pnp problem has been investigated in various contexts using several control-theoretic methods. In \cite{Hamel2018-ib, Hua2020-xl}, the observability Gramian was calculated to derive observability conditions for a linear time-varying 6-DOF system using bearing and velocity measurements. A similar Gramian-based approach was applied in \cite{Zivan2022-pw} to a satellite relative navigation system with line-of-sight and IMU measurements modeled using dual quaternions. Of the prior works, the present analysis is most closely related to \cite{Sun2002-pb} and \cite{Mirzaei2008-kl}. In \cite{Sun2002-pb}, the authors investigated uniqueness of the \pnp problem across the one-, two-~, \mbox{three-,} and $n$-unit vector measurement cases, deriving conditions on the rank of the Fisher information matrix. Their analysis, however, does not incorporate a process model, relies on a small-angle linearization about the true attitude, and characterizes only instantaneous observability, a stricter requirement than what sequential state estimators require, since such estimators propagate prior state information. A Lie algebraic observability analysis in a related setting was performed in \cite{Mirzaei2008-kl} for 6-DOF IMU-camera calibration. However, that work incorporates IMU measurements and does not treat the \pnp geometry directly.

To the best of our knowledge, this work presents the first nonlinear observability analysis of the \pnp problem using a Lie algebraic approach, establishing local weak observability through rank analysis of the observability codistribution matrix. The analysis is carried out for two sensing modalities: relative position vector measurements and unit vector measurements representative of projected camera observations. Notably, the observability conditions for the unit vector case recover the classical collinear-feature degeneracy known from the Perspective-Three-Point (\ptp) literature, a result typically derived through algorithmic and geometric arguments, here obtained from a control-theoretic perspective. In contrast to \cite{Sun2002-pb}, the present approach incorporates the system motion model, makes no small-angle approximation, and respects the underlying \se geometry. The resulting observability conditions are directly applicable to sequential state estimators and establish that three unit vector measurements suffice to uniquely determine the pose given a prior estimate, providing a theoretical guarantee of practical relevance for filter-based pose estimation.


The second key contribution is a dual quaternion formulation of a Lie group unscented Kalman filter (UKF) \cite{Brossard2017-nf, Zambo2024-gx} which is specifically tailored to the \pnp problem and is used to demonstrate the nonlinear local observability conditions from the first primary contribution. Previous works have developed dual quaternion extended Kalman filters (EKFs) \cite{Hudoba-de-Badyn2025-bk, Razgus2017-zq, Filipe2015-ep, Zivan2022-pw} and particle filters \cite{Sveier2021-ev} for {the} relative pose and velocity estimation problem. However, these works either assume the relative velocity is measured directly via cooperative gyroscopes or lasers, is constant, or is slowly varying and estimated as a random walk. In{ the} formulation {here},{ none} of these assumptions {are made,} and a model of the relative dynamics {is incorporated} directly, allowing for improved pose and velocity estimation through a complete modeling of rigid body motion. Additionally, while the UKF is computationally more demanding than an EKF, it is known to perform better for highly nonlinear systems, does not require the calculation of Jacobian matrices which may be analytically intractable for some systems, and is much less computationally demanding than a particle filter which is oftentimes infeasible to run on mobile systems without a GPU.

The remainder of this article is organized as follows. Background material on quaternions and dual quaternions for modeling states in \so and \se is reviewed in Section~\ref{sec:quaternion}. The dual quaternion relative motion dynamics are presented at the end of Section~\ref{sec:model} in the context of a camera observing a target. A brief introduction to nonlinear observability theory and the matrix rank condition for local observability is provided in Section~\ref{sec:nlobsv}. Sufficient conditions for the relative pose and velocity states to be locally observable with only relative position measurements and with only unit vector measurements are derived in Sections~\ref{sec:position_analysis} and \ref{sec:proj_analysis}, respectively. A dual quaternion formulation of a Lie algebraic UKF, with corresponding pseudocode and code repository, is presented in Section~\ref{sec:estimator}, with additional details provided in the Appendix. The previously derived observability conditions are demonstrated in Section~\ref{sec:simulation} using the dual quaternion unscented Kalman filter in a simple simulation with dynamic camera and target motion, highlighting improved pose estimation accuracy and robustness during occlusions relative to a traditional \pnp solver implemented in OpenCV. Conclusions and future research directions are discussed in Section~\ref{sec:conclusion}.

\section{Quaternion Framework} \label{sec:quaternion}
In this section, the fundamentals of the quaternion framework which will serve as a basis for the relative motion model, observability analyses, and filter design in subsequent sections are introduced. The quaternion and dual quaternion definitions and properties presented here are summarized from \cite{Reynolds2018-ou, Valverde2018-ri, Filipe2015-zc, Sola2017-mp}.
\begin{table*}
    \centering
    \renewcommand{\arraystretch}{1.7}
    \caption{Quaternion and dual quaternion operations and element definitions \cite{Filipe2015-zc, Stanfield2021-bt}.}
    \label{tab:quaternion}
    \begin{tabular}{
        l
        >{\centering\arraybackslash}m{0.35\textwidth}
        >{\centering\arraybackslash}m{0.35\textwidth}
    }
        \toprule
        \textbf{Operation}
            & \textbf{Quaternion Definition}
            & \textbf{Dual Quaternion Definition} \\
        \midrule
        Addition
            & $\qblank{a} + \qblank{b} = \mtx{\qscalar{a} + \qscalar{b} & \qvec{a} + \qvec{b}}$
            & $\dualqblank{a} + \dualqblank{b} = \qreal{a} + \qreal{b}+ \dualunit (\qdual{a} + \qdual{b})$ \\
        Scalar Multiplication
            & $\lambda \qblank{a} = \mtx{\lambda \qscalar{a} & \lambda \qvec{a}}$
            & $\lambda \dualqblank{a} = \lambda \qreal{a} + \dualunit \lambda \qdual{a}$ \\
        Multiplication
            & $\qblank{a} \qblank{b} = \mtx{ \qscalar{a} \qscalar{b} - \qvec{a} \qdot \qvec{b} & \qscalar{a} \qvec{b} + \qscalar{b} \qvec{a} + \qvec{a} \cross \qvec{b}}$
            & $\dualqblank{a} \dualqblank{b} = \qreal{a} \qreal{b} + \dualunit (\qdual{a} \qreal{b} + \qreal{a} \qdual{b})$ \\
        Conjugate
            & $\qconj{\qblank{a}} = \mtx{\qscalar{a} & -\qvec{a}}$
            & $\qconj{\dualqblank{a}} = \qconj{\qreal{a}} + \dualunit \qconj{\qdual{a}}$ \\
        Dot Product
            & $\qblank{a} \qdot \qblank{b} = \mtx{\qscalar{a} \qscalar{b} + \qvec{a} \qdot \qvec{b} & \zeros{3}{}}$
            & $\dualqblank{a} \qdot \dualqblank{b} = \qreal{a} \qdot \qreal{b} + \dualunit (\qdual{a} \qdot \qreal{b} + \qreal{a} \qdot \qdual{b})$ \\
        Cross Product
            & $\qblank{a} \cross \qblank{b} = \mtx{0 & \qscalar{a} \qvec{b} + \qscalar{b} \qvec{a} + \qvec{a} \cross \qvec{b}}$
            & $\dualqblank{a} \cross \dualqblank{b} = \qreal{a} \cross \qreal{b} + \dualunit (\qdual{a} \cross \qreal{b} + \qreal{a} \cross \qdual{b})$ \\
        Norm
            & $\norm{\qblank{a}} = \sqrt{\qscalar{a}^2 + \qvec{a} \qdot \qvec{a}}$
            & $\norm{\dualqblank{a}} = \sqrt{\norm{\qreal{a}}^2 + \norm{\qdual{a}}^2}$ \\
        Real Part   & Undefined  & $\qreal{(\dualqblank{a})} = \qreal{\qblank{a}}$ \\
        Dual Part   & Undefined  & $\qdual{(\dualqblank{a})} = \qdual{\qblank{a}}$ \\
        Zero Element   & $\qzero = \mtx{0 & \zeros{3}{}}$  & $\dualzero = \qzero + \dualunit \qzero$ \\
        Identity Element   & $\qone = \mtx{1 & \zeros{3}{}}$  & $\dualone = \qone + \dualunit \qzero$ \\
        \bottomrule
    \end{tabular}
\end{table*}

\subsection*{Quaternions}
A quaternion object is defined by its coefficients $\qscalar{q}, \qblank{q_1}, \qblank{q_2}, \qblank{q_3} \in \reals{}$ and basis elements $\basis{i}, \basis{j}, \basis{k}$:
\begin{align}
    \qblank{q} = \qscalar{q} + \qblank{q_1} \basis{i} + \qblank{q_2} \basis{j} + \qblank{q_3} \basis{k}.
\end{align}
The basis elements satisfy the properties \mbox{$\basis{i}^2 = \basis{j}^2 = \basis{k}^2 = -1$}, \mbox{$\basis{i} = \basis{j}\basis{k} = -\basis{k}\basis{j}$}, \mbox{$\basis{j} = \basis{k}\basis{i} = -\basis{i}\basis{k}$}, \mbox{$\basis{k} = \basis{i}\basis{j} = -\basis{j}\basis{i}$}, and \mbox{$\basis{i}\basis{j}\basis{k} = -1$}. The sets of quaternions, scalar quaternions, and vector quaternions are defined as \mbox{$\quats = \left\{\qblank{q} \st  \qscalar{q} + \qblank{q_1} \basis{i} + \qblank{q_2} \basis{j} + \qblank{q_3} \basis{k} \right\}$}, \mbox{$\quatss = \left\{\qblank{q} \in \quats \st \qblank{q_1}, \qblank{q_2}, \qblank{q_3} = 0 \right\}$}, and \mbox{$\quatsv = \left\{\qblank{q} \in \quats \st \qscalar{q} = 0 \right\}$}. The elements of a quaternion are typically divided into a scalar component, $\qscalar{q}$, and vector component, \mbox{$\qvec{q} =  \mtx{ \qblank{q_1} & \qblank{q_2} & \qblank{q_3} } \in \reals{3}$}, allowing a quaternion to be succinctly expressed as the concatenated array \mbox{$\qblank{q} = \mtx{\qscalar{q} & \qvec{q} } \in \quats$}. In some definitions, the scalar component is the last entry in the quaternion array, however, for this work the scalar component will always be the first entry. Quaternion operations and element definitions are defined in Table~\ref{tab:quaternion}.

Unit quaternions form a Lie group that is isomorphic to \su and are an advantageous \so representation compared to Euler angles because they are singularity-free and do not suffer from ``gimbal lock'' degenerate orientations. The set of unit quaternions is defined as \mbox{$\quatsu = \left\{ \qblank{q} \in \quats \st  \qconj{\qblank{q}} \qblank{q} = \qblank{q} \qconj{\qblank{q}} = \qblank{q} \qdot \qblank{q} = \qone \right\}$}. A unit quaternion $\qblank{q}_u \in \quatsu$ can be expressed as a rotation angle, $\phi \in \reals{}$, about a unit vector, $\qvec{n} \in \reals{3}$:
\begin{align}
    \qblank{q}_u = \mtx{ \cos \left( \frac{\phi}{2} \right) & \sin \left( \frac{\phi}{2} \right) \qvec{n} }. 
\end{align}
The orientation of a frame $\cf{x}$ with respect to a frame $\cf{y}$ is represented by the {unit quaternion} \mbox{$\q{x}{y} \in \quatsu$}. Unit quaternions have the conjugate inverse property
\begin{gather}
    \label{eq:unitq}
    \inv{\q{x}{y}} = \qconj{\q{x}{y}} \triangleq \q{y}{x},
\end{gather}
admitting a convenient form for transforming a vector in $\reals{3}$ between coordinate frames. By lifting a vector $\qvecf{v}{x} \in \reals{3}$ represented in frame $\cf{x}$ coordinates to a vector quaternion \mbox{$\qblankf{v}{x} = \mtx{ 0 & \qvecf{v}{x} } \in \quatsv$}, coordinate transformations to and from frame $\cf{y}$ have the form
\begin{salign}
    \qblankf{v}{y} &= \qconj{\q{y}{x}} \qblankf{v}{x} \q{y}{x}, \\
    \qblankf{v}{x} &= \q{y}{x} \qblankf{v}{y} \qconj{\q{y}{x}}. 
\end{salign}
Additionally, unit quaternions can be chained together to solve for the total relative rotation between multiple reference frames: 
\begin{gather}
    \q{x}{y} = \qconj{\q{y}{z}} \q{x}{z}. 
\end{gather}

The unit quaternion kinematic equation is
\begin{align}
    \label{eq:qkin}
    \dq{x}{y} = \frac{1}{2} \q{x}{y} \qomega{x}{y}{x} = \frac{1}{2} \qomega{x}{y}{y} \q{x}{y}, 
\end{align}
where $\omeg{x}{y}{x} \in \reals{3}$ is the angular velocity of frame~$\cf{x}$ relative to frame~$\cf{y}$ expressed in frame $\cf{x}$ coordinates, and \mbox{$\qomega{x}{y}{x} = \mtx{0 & \omeg{x}{y}{x}} \in \quatsv$}.

\subsubsection*{Matrix Form}
Many quaternion operations can be rewritten in a matrix form which allows for easier manipulation and use of matrix calculus tools for quaternion calculus. A quaternion $\qblank{q} \in \quats$ left multiplied by a matrix \mbox{$\mblank{M} \in \reals{4 \times 4}$} follows the standard matrix multiplication algebra as if $\qblank{q}$ were a $\reals{4 \times 1}$ vector:
\begin{sgather}
    \mblank{M} = \mtx{m & \mblank{m}_1 \\ \mblank{m}_2 & \mblank{M}_3} \in \reals{4 \times 4},  \\
    \mblank{M} \qblank{q} = \mtx{m \qscalar{q} + \mblank{m}_1 \qvec{q}, \ \mblank{m}_2 \qscalar{q} + \mblank{M}_3 \qvec{q}} \in \quats,
\end{sgather}
where $m \in \reals{}, \ \mblank{m}_1 \in \reals{1 \times 3}, \ \mblank{m}_2 \in \reals{3 \times 1}$, and $\ \mblank{M}_3 \in \reals{3 \times 3}$.

The left, $\qleft{q}$, and right, $\qright{q}$, quaternion multiplication matrices are defined as:
\begin{salign}
    \skewmat{q} &= \mtx{0 & -\qblank{q}_3 & \qblank{q}_2 \\
    \qblank{q}_3 & 0 & -\qblank{q}_1 \\
    -\qblank{q}_2 & \qblank{q}_1 & 0} \in \reals{3 \times 3},  \\
    \qleft{q} &= \qscalar{q} \eye{4} + \mtx{0 & -\tpose{\qvec{q}}  \\
    \qvec{q} & \skewmat{q}} \in \reals{4 \times 4},  \\
    \qright{q} &= \qscalar{q} \eye{4} + \mtx{0 & -\tpose{\qvec{q}} \\
    \qvec{q} & -\skewmat{q}} \in \reals{4 \times 4}, 
\end{salign}
where $\eye{n}$ is the $n \times n$ identity matrix and $\qvec{q}$ is interpreted as a vector {in $\reals{3 \times 1}$} for consistent block matrix composition. Quaternion multiplication can then be re-written into a matrix form as
\begin{sgather}
    \qblank{a} \qblank{b} = \qleft{a} \qblank{b} = \qright{b} \qblank{a},  \\
    \qblank{a} \qblank{b} \qblank{c} = \qleft{\qblank{a} \qblank{b}} \qblank{c} = \qright{\qblank{b} \qblank{c}} \qblank{a} 
\end{sgather}
for $\qblank{a}, \qblank{b}, \qblank{c} \in \quats$. Note that attention must be exercised to mind the order of operations when transforming quaternions to and from the matrix form, as demonstrated by the following example:
\begin{gather}
     \qblank{a} \qblank{b} \qblank{c} = \left( \qright{\qblank{b}} \qblank{a} \right) \qblank{c} \neq \qright{\qblank{b}} \left( \qblank{a}  \qblank{c} \right). 
\end{gather}
Additionally, the quaternion cross product can be reconstructed as the matrix
\begin{gather}
    \qcross{q} = \mtx{0 & \zeros{1}{3} \\
    \qvec{q} & \qscalar{q} \eye{3} + \skewmat{q}},  
\end{gather}
such that $\qblank{a} \cross \qblank{b} = \qcross{a} \qblank{b}$.

The quaternion conjugate is antihomomorphic
\begin{gather}
    \qconj{\left( \qblank{a} \qblank{b} \right)} = \qconj{\qblank{b}} \qconj{\qblank{a}} 
\end{gather}
and can be expressed in matrix form using the conjugate matrix $\eyeconj = \diag{\mtx{1 & -1 & -1 & -1}} \in \reals{4 \times 4}$:
\begin{gather}
    \qconj{\qblank{q}} = \eyeconj \qblank{q} 
    , 
\end{gather}
where $\diag{\cdot}$ is the diagonal matrix function which places the input vector along the diagonal and off-diagonal elements are zero.


\subsubsection*{Derivatives}
Recasting quaternion expressions in matrix form enables the direct application of matrix calculus tools for differentiation. Some common quaternion derivatives that will be used to construct dual quaternion derivatives in later sections are:
\begin{gather}
    \pd{\qblank{a} \qblank{b}}{\qblank{a}} = \qright{b},  \quad
    \pd{\qblank{a} \qblank{b}}{\qblank{b}} = \qleft{a}, \quad
    \pd{\qconj{\qblank{a}} \qblank{b} \qblank{a}}{\qblank{a}} = \qleft{\qconj{\qblank{a}} \qblank{b}} + \qright{\qblank{b} \qblank{a}} \eyeconj. \label{eq:qderiv}
\end{gather}

\subsection*{Dual Quaternions}
Similar to how a complex number is composed of a real and imaginary component, a dual quaternion is formed by a real part, $\qreal{q} \in \quats$, and a dual part, $\qdual{q} \in \quats$. The dual unit, $\dualunit$, has the properties $\dualunit^2 = 0$ and $\dualunit \neq 0$. The set of dual quaternions, scalar dual quaternions, and vector dual quaternions are defined as $\dquats = \left\{\dualqblank{q} \st \dualqblank{q} = \qreal{q} + \dualunit \qdual{q}, \ \qreal{q}, \qdual{q} \in \quats \right\}$, $\dquatss = \left\{\dualqblank{q} \in \dquats \st \qreal{q}, \qdual{q} \in \quatss \right\}$, and \mbox{$\dquatsv = \left\{\dualqblank{q} \in \dquats \st \qreal{q}, \qdual{q} \in \quatsv \right\}$}. 
Dual quaternion operations and element definitions are given in Table~\ref{tab:quaternion}. 

The dual pose can be thought of as an extension of the unit quaternion and is a dual quaternion which embeds the relative orientation and translation between 6-DOF coordinate frames. The dual pose belongs to the set of {unit dual quaternions} defined as \mbox{$\dquatsu = \left\{\dualqblank{q} \in \dquats \st \qconj{\dualqblank{q}} \dualqblank{q} = \dualqblank{q} \qconj{\dualqblank{q}} = \dualqblank{q} \qdot \dualqblank{q} = \dualone\right\}$}. The pose of a frame $\cf{x}$ with respect to a frame $\cf{y}$ is represented by the {dual pose}, $\dualq{x}{y} \in \dquatsu$, and is defined as
\begin{align}
    \dualq{x}{y} &= \q{x}{y} + \dualunit \frac{1}{2} \qpos{x}{y}{y} \q{x}{y} \label{eq:kinxy} \\
    &= \q{x}{y} + \dualunit \frac{1}{2} \q{x}{y} \qpos{x}{y}{x},
\end{align}
where $\q{x}{y} \in \quatsu$ is the orientation of frame $\cf{x}$ with respect to frame $\cf{y}$, $\pos{x}{y}{x} \in \reals{3}$ is the position of frame $\cf{x}$ relative to frame $\cf{y}$ expressed in frame $\cf{x}$ coordinates, and \mbox{$\qpos{x}{y}{x} = \mtx{ 0 & \pos{x}{y}{x} } \in \quatsv$}.

Many parallels exist between quaternions and dual quaternions in terms of properties and expression forms for changing coordinate frames, kinematics, and derivatives. Similar to unit quaternions, unit dual quaternions also have the conjugate inverse property:
\begin{gather}
    \inv{\dualq{x}{y}} = \qconj{\dualq{x}{y}} \triangleq \dualq{y}{x}.
\end{gather}
Changing coordinate frames between vector dual quaternions $\dualqblankf{v}{x}, \dualqblankf{v}{y} \in \dquatsv$ has the familiar quaternion form
\begin{salign}
    \dualqblankf{v}{y} &= \qconj{\dualq{y}{x}} \dualqblankf{v}{x} \dualq{y}{x}, \\
    \dualqblankf{v}{x} &= \dualq{y}{x} \dualqblankf{v}{y} \qconj{\dualq{y}{x}}. 
\end{salign}
Additionally, unit dual quaternions can be chained together over intermediate pose transformations to solve for the total relative transformation between frames
\begin{gather}
    \dualq{x}{y} = \qconj{\dualq{y}{z}} \dualq{x}{z}. 
\end{gather}

The {dual velocity} is a vector dual quaternion that embeds the relative rotational and translational velocities between coordinate frames and is defined as
\begin{align}
    \dualomega{x}{y}{z} = \qomega{x}{y}{z} + \dualunit (\qvel{x}{y}{z} + \qomega{x}{y}{z} \cross \qpos{z}{x}{z}) \in \dquatsv,  \label{eq:dqvel}
\end{align}
where $\omeg{x}{y}{z} \in \reals{3}$ is the angular velocity of frame $\cf{x}$ relative to frame $\cf{y}$ expressed in frame $\cf{z}$ coordinates, \mbox{$\qomega{x}{y}{z} = \mtx{ 0 & \omeg{x}{y}{z}} \in \quatsv$,  $\vel{x}{y}{z} \in \reals{3}$} is the translational velocity of frame $\cf{x}$ relative to frame $\cf{y}$ expressed in frame $\cf{z}$ coordinates, \mbox{$\qvel{x}{y}{z} = \mtx{ 0 & \vel{x}{y}{z} } \in \quatsv$}, and $\pos{z}{x}{z} \in \reals{3}$ is the position of frame $\cf{z}$ relative to frame $\cf{x}$ expressed in frame $\cf{z}$ coordinates, \mbox{$\qpos{z}{x}{z} = \mtx{ 0 & \pos{z}{x}{z} } \in \quatsv$}. Calculating relative velocities between dual velocities expressed in the same coordinate frame follows standard vector subtraction \mbox{$\dualomega{x}{y}{z} = \dualomega{x}{w}{z} - \dualomega{y}{w}{z}.$}

The dual quaternion kinematics are
\begin{align}
    \ddualq{x}{y} = \frac{1}{2} \dualq{x}{y} \dualomega{x}{y}{x} = \frac{1}{2} \dualomega{x}{y}{y} \dualq{x}{y}.  \label{eq:dqkin}
\end{align}
The simplicity and resemblance to the quaternion kinematics in \eqref{eq:qkin} once more highlight dual quaternions as a natural extension of quaternions to \se.

\subsubsection*{Matrix Form}
Dual quaternion operations can also be expressed in matrix form, where a dual quaternion $\dualqblank{q} \in \dquats$ left multiplied by a matrix follows the standard matrix multiplication algebra as if $\dualqblank{q}$ were a $\reals{8 \times 1}$ vector:
\begin{sgather}
    \mblank{M} = \mtx{\mblank{M}_{1} & \mblank{M}_{2} \\ \mblank{M}_{3} & \mblank{M}_{4}} \in \reals{8 \times 8},  \\
    \mblank{M} \dualqblank{q} = (\mblank{M}_{1} \qreal{q} + \mblank{M}_{2} \qdual{q}) + \dualunit (\mblank{M}_{3} \qreal{q} + \mblank{M}_{4} \qdual{q}) \in \dquats,
\end{sgather}
where $\mblank{M}_{1}, \mblank{M}_{2}, \mblank{M}_{3}, \mblank{M}_{4} \in \reals{4 \times 4}$.

For a dual quaternion $\dualqblank{q} \in \dquats$, the left and right dual quaternion multiplication matrices are
\begin{salign}
    \qleft{\dualqblank{q}} &= \mtx{\qleft{\qreal{\qblank{q}}} & \zeros{4}{4} \\
    \qleft{\qdual{\qblank{q}}} & \qleft{\qreal{\qblank{q}}}} \in \reals{8 \times 8}  \\
    \qright{\dualqblank{q}} &= \mtx{\qright{\qreal{\qblank{q}}} & \zeros{4}{4} \\
    \qright{\qdual{\qblank{q}}} & \qright{\qreal{\qblank{q}}}} \in \reals{8 \times 8} \label{eq:dqleftright}. 
\end{salign}
Dual quaternion multiplication can be reconstructed in matrix form as
\begin{sgather}
    \dualqblank{a} \dualqblank{b} = \qleft{\dualqblank{a}} \dualqblank{b} = \qright{\dualqblank{b}} \dualqblank{a}, \\
    \dualqblank{a} \dualqblank{b} \dualqblank{c} = \qleft{\dualqblank{a} \dualqblank{b}} \dualqblank{c} = \qright{\dualqblank{b} \dualqblank{c}} \dualqblank{a} = \left( \qright{\dualqblank{b}} \dualqblank{a} \right) \dualqblank{c} = \dualqblank{a} \left( \qright{\dualqblank{c}} \dualqblank{b}\right)
\end{sgather}
for dual quaternions $\dualqblank{a}, \dualqblank{b}, \dualqblank{c} \in \dquats$. The dual quaternion cross product has a matrix form as well:
\begin{gather}
    \qcross{\dualqblank{q}} = \mtx{\qcross{\qreal{\qblank{q}}} & \zeros{4}{4} \\
    \qcross{{\qdual{\qblank{q}}}} & \qcross{\qreal{\qblank{q}}}} \in \reals{8 \times 8},  
\end{gather}
such that $\dualqblank{a} \cross \dualqblank{b} = \qcross{\dualqblank{a}} \dualqblank{b}$. 

Lastly, the dual quaternion conjugate is also antihomomorphic:
\begin{gather}
    \qconj{\left( \dualqblank{a} \dualqblank{b} \right)} = \qconj{\dualqblank{b}} \qconj{\dualqblank{a}} 
\end{gather}
and can be rewritten using the dual conjugate matrix \mbox{$\dualeyeconj = \blkdiag{\eyeconj, \eyeconj} \in \reals{8 \times 8}$},
\begin{gather}
    \qconj{\dualqblank{q}} = \eyeconj \dualqblank{q},
\end{gather}
where $\blkdiag{\cdot}$ is the block diagonal matrix function which returns a block diagonal matrix with its input matrices as the diagonal blocks and zero matrices on the off-diagonal blocks.

\subsubsection*{Derivatives}
By rewriting dual quaternion operations in a matrix form, dual quaternion derivatives can be calculated using matrix calculus tools and can be written in a compact form that mirrors quaternion derivatives. Some common derivatives that will be used in the following observability analysis sections are:
\begin{gather}
    \pd{\dualqblank{a} \dualqblank{b}}{\dualqblank{a}} = \qright{\dualqblank{b}},  \quad
    \pd{\dualqblank{a} \dualqblank{b}}{\dualqblank{b}} = \qleft{\dualqblank{a}}, \quad 
    \pd{\qconj{\dualqblank{a}} \dualqblank{b} \dualqblank{a}}{\dualqblank{a}} = \qleft{\qconj{\dualqblank{a}} \dualqblank{b}} + \qright{\dualqblank{b} \dualqblank{a}} \dualeyeconj \label{eq:dq_deriv}. 
\end{gather}

\subsection*{Rigid Body Dynamics} \label{sec:model}
\begin{figure}[t]
\centering
\resizebox{.45\textwidth}{!}{%
\begin{circuitikz}
\tikzstyle{every node}=[font=\fontsize{15pt}{22pt}\selectfont]
\draw [line width=0.6pt, -{Triangle[scale=1.5]}, ] (3.625,5.75) -- (5.75,5.75);
\draw [line width=0.6pt, -{Triangle[scale=1.5]}, ] (3.625,5.75) -- (3.625,7.625);
\draw [line width=0.6pt, -{Triangle[scale=1.5]}, ] (3.625,5.75) -- (2.625,4.5);
\draw [line width=0.6pt, -{Triangle[scale=1.5]}, ] (8.625,10.75) -- (10.75,10.25);
\draw [line width=0.6pt, -{Triangle[scale=1.5]}, ] (8.625,10.75) -- (7.25,9.875);
\draw [line width=0.6pt, -{Triangle[scale=1.5]}, ] (13.375,7.5) -- (12.5,9.375);
\draw [line width=0.6pt, -{Triangle[scale=1.5]}, ] (13.375,7.5) -- (12.875,6.125);
\draw [line width=0.6pt, -{Triangle[scale=1.5]}, dashed] (3.625,5.75) -- (8.625,10.75)node[pos=0.5, fill={rgb,255:red,255; green,255; blue,255}, fill opacity=1, text opacity=1, inner xsep=0.080cm, inner ysep=0.085cm, rounded corners=0.020cm]{$\overline{r}_{C/J}$};
\node [font=\fontsize{15pt}{22pt}\selectfont, inner xsep=0.080cm, inner ysep=0.085cm, rounded corners=0.005cm] at (3.75,5.25) {$J$};
\node [font=\fontsize{15pt}{22pt}\selectfont, inner xsep=0.080cm, inner ysep=0.085cm, rounded corners=0.020cm] at (8.625,10.125) {$C$};
\node [font=\fontsize{15pt}{22pt}\selectfont, inner xsep=0.080cm, inner ysep=0.085cm, rounded corners=0.020cm] at (13.75,7.125) {$T$};
\draw [line width=0.6pt, -{Triangle[scale=1.5]}, ] (8.625,10.75) -- (9,12.75);
\draw [line width=0.6pt, -{Triangle[scale=1.5]}, ] (13.375,7.5) -- (15.375,8.25);
\draw [line width=0.6pt, -{Triangle[scale=1.5]}, dashed] (8.625,10.75) -- (13.375,7.5)node[pos=0.5, fill={rgb,255:red,255; green,255; blue,255}, fill opacity=1, text opacity=1, inner xsep=0.080cm, inner ysep=0.085cm, rounded corners=0.020cm]{$\overline{r}_{T/C}$};
\draw [line width=0.6pt, -{Triangle[scale=1.5]}, dashed] (3.625,5.75) -- (13.375,7.5)node[pos=0.5, fill={rgb,255:red,255; green,255; blue,255}, fill opacity=1, text opacity=1, inner xsep=0.080cm, inner ysep=0.085cm, rounded corners=0.020cm]{$\overline{r}_{T/J}$};
\end{circuitikz}
}%
\caption{Diagram of the target ($\target$), camera ($\camera$), and inertial ($\inertial$) coordinate frames and the relative position vectors \mbox{($\pos{\target}{\inertial}{}, \pos{\camera}{\inertial}{}, \pos{\target}{\camera}{} \in \reals{3}$)} between them.}
\label{fig:rel_frames}
\end{figure}
The 6-DOF relative motion between two rigid bodies can be modeled using dual quaternions to represent pose, translational velocity, and angular velocity. Motivated by visual target tracking, define two body-fixed frames, the target frame, $\target$, and the camera frame, $\camera$, along with an inertial reference frame, $\inertial$. \cref{fig:rel_frames} illustrates the geometry of these three coordinate frames. 

The net external forces, $\force{\camera} \in \reals{3}$, and torques, $\torque{\camera} \in \reals{3}$, applied at the camera center of mass in frame $\cf{\camera}$ coordinates can be lifted to vector quaternions and embedded into the force vector dual quaternion $\dualforce{\camera} \in \dquatsv$:
\begin{sgather}
    \qforce{\camera} = \mtx{0 & \force{\camera}} \in \quatsv, \quad \qtorque{\camera} = \mtx{ 0 & \torque{\camera} } \in \quatsv, \\
    \dualforce{\camera} = \qforce{\camera} + \dualunit \qtorque{\camera}.
\end{sgather}
The mass matrix, $\mass{\camera}$, contains the mass, $m$, and inertia matrix, $\inertia \in \mathbb{S}_{++}^3$, properties of the system in frame $\cf{\camera}$ coordinates
\begin{gather}
    \mass{\camera} = \begin{bmatrix}
        \zeros{4}{4} & 
        \begin{matrix}
            1 & \zeros{3}{1} \\
            \zeros{1}{3} & m \eye{3}
        \end{matrix} \\
        \begin{matrix}
            1 & \zeros{3}{1} \\
            \zeros{1}{3} & \inertia
        \end{matrix} & \zeros{4}{4}
    \end{bmatrix}. 
\end{gather}

From \eqref{eq:dqkin} and \cite{Wang2012-lx}, the equations of motion for the camera frame with respect to the inertial frame, $\inertial$, in dual quaternion form are 
\begin{salign}
    \ddualq{\camera}{\inertial} &= \frac{1}{2} \dualq{\camera}{\inertial} \dualomega{\camera}{\inertial}{\camera},  \\
    \ddualomega{\camera}{\inertial}{\camera} &= \inv{\left(\mass{\camera}\right)}  \left(\dualforce{\camera} - \dualomega{\camera}{\inertial}{\camera} \cross \mass{\camera} \dualomega{\camera}{\inertial}{\camera}\right).  \label{eq:ci_dynamics}
\end{salign}
The target frame with respect to the inertial frame has the same form:
\begin{salign}
    \ddualq{\target}{\inertial} &= \frac{1}{2} \dualq{\target}{\inertial} \dualomega{\target}{\inertial}{\target},  \\
    \ddualomega{\target}{\inertial}{\target} &= \inv{\left(\mass{\target}\right)}  \left(\dualforce{\target} - \dualomega{\target}{\inertial}{\target} \cross \mass{\target} \dualomega{\target}{\inertial}{\target} \right). 
\end{salign}

The relative motion dynamics are derived by first differentiating $\dualomega{\target}{\camera}{\camera} = \dualomega{\target}{\inertial}{\camera} - \dualomega{\camera}{\inertial}{\camera}$ with respect to time to yield
\begin{gather}
    \ddualomega{\target}{\camera}{\camera} = \ddualomega{\target}{\inertial}{\camera} - \ddualomega{\camera}{\inertial}{\camera}. \label{eq:relomega} 
\end{gather}
Next, applying Proposition 1 from \cite{Filipe2013-bj} to derive an equivalent expression for $\ddualomega{\target}{\inertial}{\camera}$ and simplifying gives
\begin{align}
    \ddualomega{\target}{\inertial}{\camera} &= \dualq{\target}{\camera} \left( \ddualomega{\target}{\inertial}{\target} + \dualomega{\target}{\camera}{\target} \cross \dualomega{\target}{\inertial}{\target} \right) \qconj{\dualq{\target}{\camera}} \\
    &= \dualq{\target}{\camera} \ddualomega{\target}{\inertial}{\target} \qconj{\dualq{\target}{\camera}} + \dualomega{\target}{\camera}{\camera} \cross \left( \dualomega{\target}{\camera}{\camera} + \dualomega{\camera}{\inertial}{\camera} \right) \\
    &= \dualq{\target}{\camera} \ddualomega{\target}{\inertial}{\target} \qconj{\dualq{\target}{\camera}} + \dualomega{\target}{\camera}{\camera} \cross \dualomega{\camera}{\inertial}{\camera} \label{eq:omega_dot}.
\end{align}
The expression for $\ddualomega{\target}{\inertial}{\camera}$ in \eqref{eq:omega_dot} is then substituted into \eqref{eq:relomega} to yield the complete relative pose dynamics
\begin{subequations} \label{eq:eom}
    \begin{align}
    \ddualq{\target}{\camera} &= \frac{1}{2}  \dualomega{\target}{\camera}{\camera} \dualq{\target}{\camera}, \label{eq:eom_kin} \\
    \ddualomega{\target}{\camera}{\camera} &= \dualq{\target}{\camera} \ddualomega{\target}{\inertial}{\target} \qconj{\dualq{\target}{\camera}} 
    + \dualomega{\target}{\camera}{\camera} \cross \dualomega{\camera}{\inertial}{\camera} 
    - \ddualomega{\camera}{\inertial}{\camera}.
    \end{align}
\end{subequations}

To maintain generality, no assumptions are made regarding any particular model for the external forces and torques. 
For a concrete example of such modeling, see \cite{Filipe2015-zc} for a dual quaternion representation of forces and torques acting on a spacecraft.

\section{Nonlinear Observability} \label{sec:nlobsv}
Below, a brief review {is provided} of nonlinear observability and the Lie algebraic approach for determining the observability of nonlinear systems, summarized from \cite{Hermann1977-rt, Nijmeijer1990, Mirzaei2008-kl}.

Consider the nonlinear system, \(\Sigma\), with the following process and measurement models:
\begin{salign}
    \smash{\raisebox{-0.5\baselineskip}{$\Sigma:\quad$}} \dot{\state} &= f(\state, \control), \\
    \phantom{\Sigma:\quad} \meas &= h(\state),
\end{salign}
where \( \state(t) \in \mathbb{R}^{n} \), \( \meas(t) \in \mathbb{R}^{k} \), \( \control(t) \in \mathcal{U} \subseteq \mathbb{R}^{m} \), and \(\mathcal{U}\) is the set of admissible controls. Let \( \state(t, \state_0, \control) \) denote the solution trajectory to the initial value problem for \(\Sigma\) with initial condition \( \state(0) = \state_0 \) under the control input \( \control(t) \), and define \( \meas(t, \state_0, \control) = h(\state(t, \state_0, \control)) \) to be the measurement signal over the solution trajectory.
Initial states \( \state_A \) and \( \state_B \) at time $t=0$ are \mbox{\( V \)-indistinguishable} if for every control \( \control \in \mathcal{U} \), the corresponding state trajectories \( \state(t, \state_A, \control) \) and \( \state(t, \state_B, \control) \) remain in \( V \subseteq \mathbb{R}^n \) over \( t \in [0, T] \) and satisfy  
\begin{equation}
    \meas(t, \state_A, \control) = \meas(t, \state_B, \control) \ \forall \ t \in [0, T]. 
\end{equation}

The system \(\Sigma\) is {locally observable at \( \state_A \)} if there exists a neighborhood \( W \) of \( \state_A \) such that, for every neighborhood \( V \subset W \), indistinguishability within \( V \) implies \( \state_A = \state_B \). If \(\Sigma\) is locally observable at all \( \state \), then the system is deemed {locally observable}. Intuitively, this definition means that \( \state_A \) can be distinguished from nearby states within finite time and with trajectories remaining close to \( \state_A \). 

A differential geometric approach to determining local observability involves analyzing the {Lie derivatives} of the output function \( h(\state) \) with respect to the process model \( f(\state, \control) \). The zeroth through second-order Lie derivatives are given by:
\begin{salign}
    \lie{h}{f}{0} &= h(\state),  \\
    \lie{h}{f}{1} &= \nabla h(\state) \cdot f(\state, \control),  \\
    \lie{h}{f}{2} &= \lie{\left( \lie{h}{f}{1} \right)}{f}{1} = \nabla (\lie{h}{f}{1}) \cdot f(\state, \control) . 
\end{salign}
Higher-order Lie derivatives follow a similar recursive form, and derivatives are computed with respect to $\state$. If \( h(\state) \) is a scalar function, \( \nabla h(\state) \) is the gradient and is represented as a row vector in this work. If \( h(\state) \) is a vector function, \( \nabla h(\state) \) is the Jacobian matrix.  

If the process model is {control-affine}, meaning it decomposes as: $\dot{\state} = f_0(\state) + \sum_{i=1}^{m} f_i(\state) u_i$, where \( f_0(\state) \) is the drift vector field and $f_i(\state)$ are the control vector fields, then Lie derivatives can be computed separately with respect to each of the vector fields. For instance, a second-order mixed Lie derivative of \( h(\state) \), first along \( f_0(\state) \) and then along \( f_1(\state) \), is:
\begin{equation}
    \mathcal{L}_{f_1 f_0}^2 h = \mathcal{L}_{f_1}^1 (\mathcal{L}_{f_0}^1 h) = \nabla (\mathcal{L}_{f_0}^1 h) \cdot f_1(\state).
\end{equation}
Taking a Lie derivative with respect to \( f_i(\state) \) implicitly assumes \( u_i \not\equiv 0 \).

The {observability Lie algebra} $\obsvspace{}$ for a nonlinear control-affine system is defined as
\begin{equation}
    \obsvspace{} = \text{span}\left\{ \mathcal{L}_{f_i}^d h(\state) \mid d \in \mathbb{N}_0, \; i = 0, \dots, m \right\}.
\end{equation}
In this definition the Lie derivatives can be any combination of the process model vector fields and of arbitrary degree. Local observability is then determined using the following rank condition on the Jacobian of $\obsvspace{}${, $\obsv{}$}:
\begin{theorem}\cite{Hermann1977-rt}
    The nonlinear system is locally observable if \( \rank{\obsv{}} = n \) .
\end{theorem}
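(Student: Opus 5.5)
The statement is the Hermann--Krener rank condition. I would prove it with the inverse function theorem applied to finitely many Lie derivatives whose gradients are independent at the point of interest $\state_A$. Indistinguishable initial states must agree on all of these functions, so the locally injective map they define separates nearby states.

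\emph{Choosing the functions.} Since $\rank{\obsv{}} = n$ at $\state_A$, and rank is lower semicontinuous for smooth (analytic) data, I select $n$ elements $\phi_1,\dots,\phi_n$ of $\obsvspace{}$ with linearly independent gradients at $\state_A$. Each is a finite composition of Lie derivatives $\mathcal{L}_{f_{i_d}}\cdots\mathcal{L}_{f_{i_1}} h$. The map $\Phi=(\phi_1,\dots,\phi_n):\reals{n}\to\reals{n}$ then has an invertible Jacobian at $\state_A$. By the inverse function theorem there is a neighborhood $W$ of $\state_A$ on which $\Phi$ is a diffeomorphism onto its image, and in particular injective.

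\emph{Indistinguishability forces equal values of the $\phi_j$.} Let $V\subset W$ and let $\state_B\in V$ be $V$-indistinguishable from $\state_A$. Take piecewise-constant controls $\control$ with pieces of lengths $t_1,\dots,t_d$ on which the control is constant, with $\control$ chosen so the active field is $g_k=f_0+\sum_i u_i^{(k)} f_i$. The outputs agree identically for all such controls and all small $t_k$, so the mixed derivatives
\[
\frac{\partial^d}{\partial t_1\cdots\partial t_d}\, \meas(t,\state,\control)\Big|_{t=0}
\]
agree. This expression equals $\mathcal{L}_{g_1}\cdots\mathcal{L}_{g_d} h(\state)$, evaluated at $\state_A$ and at $\state_B$ respectively. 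Because this quantity is multilinear in the constant control values $u^{(k)}$, varying them (polarization) isolates each individual iterated Lie derivative along the $f_i$. Hence $\phi_j(\state_A)=\phi_j(\state_B)$ for every $j$. The trajectories remain in $V\subset W$ over $[0,T]$, which keeps these evaluations inside the domain where smoothness holds.

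\emph{Conclusion and main obstacle.} From $\Phi(\state_A)=\Phi(\state_B)$ and injectivity of $\Phi$ on $W$ we get $\state_A=\state_B$, which is local observability at $\state_A$. Applying this at every $\state$ gives the global statement as the paper phrases it. The delicate step is the second one: justifying that agreement of outputs over piecewise-constant controls yields equality of every mixed Lie derivative. This needs smooth dependence of the flow on switching times, and a multilinearity argument to separate the contributions of the individual $f_i$. I would handle it by expanding the composed flows $e^{t_d g_d}\circ\cdots\circ e^{t_1 g_1}$ and differentiating once in each $t_k$ at zero.
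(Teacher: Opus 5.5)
Your proposal is correct and is essentially the standard Hermann--Krener proof that the paper relies on by citation (the paper gives no proof of its own): choose $n$ observation-space functions with independent gradients, use the inverse function theorem to get local injectivity, and differentiate outputs of piecewise-constant controls with respect to the switching times to show that indistinguishable states agree on every such function. One slip: because of the drift $f_0$, $\mathcal{L}_{g_1}\cdots\mathcal{L}_{g_d}h$ is affine (not linear) in each $u^{(k)}$, so you isolate each $\mathcal{L}_{f_{i_1}}\cdots\mathcal{L}_{f_{i_d}}h$ by finite differences over control values, and this tacitly requires the admissible control values to affinely span $\mathbb{R}^m$ (the hypothesis the paper leaves implicit in its remark that Lie derivatives along $f_i$ presume $u_i \not\equiv 0$).
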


In linear systems, $\obsv{}$ reduces to the classical observability matrix, and its dimensions are determined by the dimensionality of the system. However, for nonlinear systems, while the number of columns of $\obsv{}$ is determined by the state dimension $n$, the number of rows necessary to satisfy the full rank condition is not known a priori. While no universal method exists for constructing $\obsv{}$, sequentially computing low-order Lie derivatives along well-chosen combinations of the process model vector fields typically yields good results. If any set of Lie derivatives generates a $\obsv{}$ that satisfies the rank condition, then the system is locally observable. In practice, if a nonlinear system is locally observable, then a state estimator, such as a Kalman filter, will have its estimate error covariance converge towards the Cramér--Rao lower bound, and estimation errors will remain statistically bounded by the estimate error covariance when initialized with a state estimate sufficiently close to the true state.

\subsection{State Definition}
In this work, the state $\state \in \dquatsu \times \dquatsv$, is composed of the dual pose and dual velocity quaternions of frame $\target$ with respect to frame $\camera$ in frame $\camera$ coordinates (from \eqref{eq:kinxy} and \eqref{eq:dqvel}):
\begin{align} \label{eq:statedef}
    \state &= \mtx{\dualq{\target}{\camera} \\ \dualomega{\target}{\camera}{\camera}} 
    = \mtx{\q{\target}{\camera} + \epsilon
            \frac{1}{2} \qpos{\target}{\camera}{\camera} \q{\target}{\camera} \\
            \qomega{\target}{\camera}{\camera}  +\epsilon\left(
            \qvel{\target}{\camera}{\camera} - \qomega{\target}{\camera}{\camera} \cross \qpos{\target}{\camera}{\camera} \right)}.
\end{align}

To simplify the observability analyses in the following sections, the following state transformation {is applied}, which is a decomposition of the dual quaternion states into their real and dual components. Let 
$\statetrans$ be the transformed quaternion state of \eqref{eq:statedef}:
\begin{align} \label{eq:transformed-state}
    \statetrans
    = \mtx{\q{\target}{\camera} \\
            \qmu{\target}{\camera} \\
            \qomega{\target}{\camera}{\camera} \\
            \qbeta{\target}{\camera}{\camera}} \in \quatsu \times \quats \times \quatsv \times \quatsv,
\end{align}
where $\qmu{\target}{\camera} = \qdual{(\dualq{\target}{\camera})} = \frac{1}{2} \qpos{\target}{\camera}{\camera} \q{\target}{\camera}$ and \mbox{$\qbeta{\target}{\camera}{\camera} = \qdual{(\dualomega{\target}{\camera}{\camera})} = \qvel{\target}{\camera}{\camera} - \qomega{\target}{\camera}{\camera} \cross \qpos{\target}{\camera}{\camera}$}. Formally, since a bijective transformation exists between $\state$ and $\statetrans$, observability of $\statetrans$ is equivalent to observability of $\state$ \cite{Andrews2024-jk}. The subsequent observability analyses therefore focus on establishing observability of $\statetrans$.

Recall from \eqref{eq:eom_kin}, the dual quaternion kinematic equation is \mbox{$\ddualq{\target}{\camera} = \frac{1}{2} \dualomega{\target}{\camera}{\camera} \dualq{\target}{\camera}$}. Expressing the dual components of $\dualq{\target}{\camera}$ and $\dualomega{\target}{\camera}{\camera}$ in terms of the transformed state variables from \eqref{eq:transformed-state}, and evaluating the dual quaternion kinematic equation yields the transformed state kinematic equations
\begin{salign}
    \dq{\target}{\camera} &= \frac{1}{2} \qomega{\target}{\camera}{\camera} \q{\target}{\camera}, \\
    \dqmu{\target}{\camera} &= \frac{1}{2} \qbeta{\target}{\camera}{\camera} \q{\target}{\camera} + \qomega{\target}{\camera}{\camera} \qmu{\target}{\camera},
\end{salign}
where $\dq{\target}{\camera} = \qreal{(\ddualq{\target}{\camera})}$ and $\dqmu{\target}{\camera} = \qdual{(\ddualq{\target}{\camera})}$.

\section{Position Vector Measurements} \label{sec:position_analysis}
The first measurement model used for observability analysis is based on a series of relative position measurements. This model captures several practical measurement types, including cameras measuring fiducial markers or radio transponders that return range and bearing data. For notational consistency and to build on the visual target tracking application introduced in previous sections, the measurement model is presented in the context of a camera measuring fiducial markers. In this model it is assumed that the position of each marker with respect to the target coordinate frame, $\pos{\marker}{\target}{\target} \in \reals{3}$, is known to the observer, where $\marker$ denotes the marker coordinate frame and $\qpos{\marker}{\target}{\target} = \mtx{0 & \pos{\marker}{\target}{\target}} \in \quatsv$.

The position vector measurement function for a single marker is
\begin{align}
    \meas &= h(\state) = \qpos{\marker}{\camera}{\camera} = \qpos{\target}{\camera}{\camera} + \q{\target}{\camera} \qpos{\marker}{\target}{\target} \qconj{\q{\target}{\camera}} \in \quatsv,
\end{align}
where $\q{\target}{\camera} \in \quatsu$ is the orientation of the target frame $\target$ with respect to the camera frame $\camera$, and $\qpos{\target}{\camera}{\camera} \in \quatsv$ is the position of frame $\target$ with respect to frame $\camera$ in frame $\camera$ coordinates. Substituting in the transformed state, the measurement function is rewritten as
\begin{align} \label{eq:meas_transformed}
    \meas = h(\statetrans) &= 2 \qmu{\target}{\camera} \qconj{\q{\target}{\camera}} + \q{\target}{\camera} \qpos{\marker}{\target}{\target} \qconj{\q{\target}{\camera}}.
\end{align}

In Section 4.3.2 of \cite{Sola2017-mp} an explicit form is presented for the derivative $\pd{\qblank{q} \qblank{a} \qconj{\qblank{q}}}{\qblank{q}}$ where $\qblank{q} \in \quatsu$ and $\qblank{a} \in \quatsv$. In this work, the alternative form \eqref{eq:qderiv} {is utilized} which is instead written in terms of the left and right quaternion multiplication matrices and will prove advantageous for matrix rank analysis:
\begin{align}
    \pd{\qblank{q} \qblank{a} \qconj{\qblank{q}}}{\qblank{q}} &= \qright{\qblank{a} \qconj{\qblank{q}}} + \qleft{\qblank{q} \qblank{a}} \eyeconj.
\end{align}
For brevity, let $\qblank{b} = \qblank{q} \qblank{a} \in \quats$, {note that then} $-\qconj{\qblank{b}} = \qblank{a} \qconj{\qblank{q}}$, and define the function
\begin{align}
    2 \kfunc{\qblank{b}} &= \qright{-\qconj{\qblank{b}}} + \qleft{\qblank{b}} \eyeconj \\
    &= 2 \mtx{0 & \zeros{1}{3} \\
    \qvec{\qblank{b}} & -\qscalar{\qblank{b}} \eye{3} - \skewmat{\qblank{b}}}.  \label{eq:Kb}
\end{align}

The rank properties of $\kfunc{\qblank{q} \qblank{a}}$ are proven in the following lemma.
\begin{lemma} \label{lemma:k_mat}
    For $\qblank{q} \in \quatsu$ and $\qblank{a} \in \quatsv$,
    \begin{gather}
        \rank{\kfunc{\qblank{{\qblank{q} \qblank{a}}}}} =
        \begin{cases}
        0 & \text{if } \qblank{{a}} = \qzero \\
        3 & \text{otherwise.}
        \end{cases}
    \end{gather}
\end{lemma}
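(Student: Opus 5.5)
Since $\kfunc{\qblank{b}}$ in \eqref{eq:Kb} has a zero first row and a zero first column except for the entry block $\qvec{\qblank{b}}$, its rank equals the rank of the $3\times 4$ lower block
\begin{gather*}
    \mblank{N} = \mtx{\qvec{\qblank{b}} & -\qscalar{\qblank{b}} \eye{3} - \skewmat{\qblank{b}}},
\end{gather*}
where $\qblank{b} = \qblank{q}\qblank{a}$. Because $\mblank{N}$ has only three rows, its rank is at most three, so the proof reduces to two facts: $\mblank{N}=0$ exactly when $\qblank{a}=\qzero$, and otherwise $\mblank{N}$ has full row rank.

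For the trivial case, $\qblank{a}=\qzero$ gives $\qblank{b}=\qzero$, so $\qscalar{\qblank{b}}=0$ and $\qvec{\qblank{b}}=\zeros{3}{}$. Then every block of $\kfunc{\qblank{b}}$ vanishes and the rank is $0$.

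For $\qblank{a}\neq\qzero$, the key observation is that $\qblank{q}$ is a unit quaternion, so $\norm{\qblank{b}} = \norm{\qblank{q}}\norm{\qblank{a}} = \norm{\qblank{a}} \neq 0$ by multiplicativity of the quaternion norm. Hence $\qblank{b}\neq\qzero$. I then need to show $\mblank{N}$ has rank $3$ for every nonzero quaternion $\qblank{b}$, splitting on whether the scalar part vanishes:
\begin{itemize}
\item If $\qscalar{\qblank{b}}\neq 0$, the right $3\times 3$ block $-(\qscalar{\qblank{b}}\eye{3}+\skewmat{\qblank{b}})$ is invertible. Indeed, $\skewmat{\qblank{b}}$ is skew-symmetric with eigenvalues $0$ and $\pm i\norm{\qvec{\qblank{b}}}$, so $\qscalar{\qblank{b}}\eye{3}+\skewmat{\qblank{b}}$ has eigenvalues $\qscalar{\qblank{b}}$ and $\qscalar{\qblank{b}}\pm i\norm{\qvec{\qblank{b}}}$, none of which is zero. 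Equivalently, $\det(\qscalar{\qblank{b}}\eye{3}+\skewmat{\qblank{b}}) = \qscalar{\qblank{b}}\norm{\qblank{b}}^2$.
\item If $\qscalar{\qblank{b}}=0$, then $\qvec{\qblank{b}}\neq 0$ and $\mblank{N}=\mtx{\qvec{\qblank{b}} & -\skewmat{\qblank{b}}}$. Here $\skewmat{\qblank{b}}$ has rank $2$ with column space $\qvec{\qblank{b}}^{\perp}$. The first column $\qvec{\qblank{b}}$ is nonzero and orthogonal to that space, so together the columns span $\reals{3}$.
\end{itemize}

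A more uniform alternative is to compute $\mblank{N}\tpose{\mblank{N}} = \norm{\qblank{b}}^2\eye{3}$. This uses $\skewmat{\qblank{b}}\skewmat{\qblank{b}}^{\mathsf T} = \norm{\qvec{\qblank{b}}}^2\eye{3}-\qvec{\qblank{b}}\tpose{\qvec{\qblank{b}}}$, together with the cancellation of the cross terms $\qscalar{\qblank{b}}(\skewmat{\qblank{b}}+\tpose{\skewmat{\qblank{b}}})=0$ and $\skewmat{\qblank{b}}\qvec{\qblank{b}}=0$. This identity gives rank $3$ at once whenever $\qblank{b}\neq\qzero$, and I would likely present that route.

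The main obstacle is only bookkeeping: the signs and transposes in the $\mblank{N}\tpose{\mblank{N}}$ computation must be handled correctly, and the unit-norm hypothesis on $\qblank{q}$ must be invoked explicitly to transfer nonvanishing from $\qblank{a}$ to $\qblank{b}$.
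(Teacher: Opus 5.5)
Your proof is correct. Its skeleton matches the paper's: the case $a=0$ is immediate, nonvanishing is transferred from $a$ to $b=qa$ using that $q$ is a unit quaternion, and the rank is read off the lower $3\times 4$ block $N$. For the transfer you use multiplicativity of the norm and the paper uses invertibility of $[q]_L$ from a cited lemma; these are equivalent.

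The real difference is the final step. The paper writes $K(b)$ elementwise and asserts, ``by inspection,'' that for whichever component $b_i\neq 0$, some column permutation and re-signing exhibits a $3\times 3$ block $b_i I_3+S$ with $S$ skew-symmetric. This is true, but it implicitly needs four different column selections, none written out. For example, when $b_1\neq 0$ one takes the first, fourth and third columns of $N$, in that order, and negates the last.

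Your two-case split uses only the minor already visible in the block form when $b_0\neq 0$, via $\det(b_0I_3+[\bar b]_\times)=b_0\|b\|^2$, and a column-space argument when $b_0=0$. Your Gram identity $NN^\top=\|b\|^2I_3$ needs no cases at all. It also gives more than a rank certificate: since $\|b\|=\|a\|$, the three nonzero singular values of $K(qa)$ all equal $\|a\|$. A minor-based argument does not provide that quantitative information.

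Structurally, the identity holds because $N$ is exactly the lower three rows of $[b]_L$ times the conjugation matrix, which is $\|b\|$ times an orthogonal matrix. A cofactor argument then shows that deleting column $j$ of $N$ leaves a minor equal to $\pm b_j\|b\|^2$. This ties your determinant formula and the paper's ``$b_i I_3+S$'' minors together as one fact.

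Two small wording fixes. The first column of $K(b)$ is not zero; only the first row is, and that is all you need. The fact $[\bar b]_\times\bar b=0$ plays no role in $NN^\top=\bar b\bar b^\top+(b_0I_3+[\bar b]_\times)(b_0I_3+[\bar b]_\times)^\top$.
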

\begin{proof}
{First, rewrite $\qzero = \qblank{q} \qblank{a}$ as $\qzero = \qleft{\qblank{q}} \qblank{a}$. Applying Lemma 3 from \cite{Andrews2024-jk} gives that $\qleft{\qblank{q}}$ is full rank if $\qblank{q} \in \quatsu$.  Therefore, the solution to the equation $\qblank{b} = \qzero = \qblank{q} \qblank{a} $ has only the trivial solution $\qblank{a} = \qzero$, for which $\rank{\kfunc{\qblank{{\qblank{q} \qblank{a}}}}} = 0$.}

If $a \neq 0$, then because $\qleft{\qblank{q}}$ is full rank, $b=qa$ has at least one non-zero element.
From \eqref{eq:Kb}, the elementwise form of $\kfunc{\qblank{{b}}}$ is
\begin{gather}
    \kfunc{\qblank{{b}}} =
    \mtx{
    0 & 0 & 0 & 0 \\
    b_1 & -b_0 & b_3 & -b_2 \\
    b_2 & -b_3 & -b_0 & b_1 \\
    b_3 & b_2 & -b_1 & -b_0
    }.
\end{gather}
{If $b_i \neq 0$ for some $i \in \{0, 1, 2, 3\}$, then by inspection {one} can see that a linear transformation of $\kfunc{\qblank{b}}$ {exists} such that there will be a $3 \times 3$ block matrix of the form \mbox{$b_i \eye{3} + \mathbf{S}_\times$} where \mbox{$\mathbf{S}_\times \in \reals{3 \times 3}$} is a skew-symmetric matrix. A non-zero constant times the identity matrix plus a skew-symmetric matrix is known to be full rank, hence $\rank{\kfunc{\qblank{b}}} = 3$ if $\qblank{a} \neq \qzero$.}
\end{proof}

\subsection*{Zeroth Order Lie Derivative}
The zeroth order Lie derivative {contribution to the Jacobian of the observability Lie algebra} is the derivative of the measurement function with respect to the transformed state variables\kamedit{:}
\begin{align}
    \lie{h}{f}{0} &= h(\statetrans) = 2 \qmu{\target}{\camera} \qconj{\q{\target}{\camera}} + \q{\target}{\camera} \qpos{\marker}{\target}{\target} \qconj{\q{\target}{\camera}}, 
\end{align}
{The Jacobian of this Lie derivative is found by direct computation as follows:}
\begin{align}
    \jacobian \lie{h}{f}{0} &= \mtx{\pd{\lie{h}{f}{0}}{\q{\target}{\camera}} & \pd{\lie{h}{f}{0}}{\qmu{\target}{\camera}} & \pd{\lie{h}{f}{0}}{\qomega{\target}{\camera}{\camera}} & \pd{\lie{h}{f}{0}}{\qbeta{\target}{\camera}{\camera}}}  \\
    &= \mtx{\pd{\lie{h}{f}{0}}{\q{\target}{\camera}} & \pd{\lie{h}{f}{0}}{\qmu{\target}{\camera}} & \zeros{4}{4} & \zeros{4}{4}}, \\
    \pd{\lie{h}{f}{0}}{\q{\target}{\camera}} &= \qright{\qpos{\marker}{\target}{\target} \qconj{\q{\target}{\camera}}} + \qleft{\q{\target}{\camera} \qpos{\marker}{\target}{\target}} \eyeconj \\
    & ~~~~ + 2 \qleft{\qmu{\target}{\camera}} \eyeconj  \\
    &= 2 \left( \kfunc{\q{\target}{\camera} \qpos{\marker}{\target}{\target}} + \qleft{\qmu{\target}{\camera}} \eyeconj \right),  \\
    \pd{\lie{h}{f}{0}}{\qmu{\target}{\camera}} &= 2 \qright{\qconj{\q{\target}{\camera}}}
\end{align}

\subsection*{First Order Lie Derivative}
The first order Lie derivative is calculated and factored in terms of the zeroth order Lie derivative:
\begin{align}
    \lie{h}{f}{1} &= \jacobian \lie{h}{f}{0} \cdot f(\statetrans, 
    \control)  \\
    &= \pd{\lie{h}{f}{0}}{\q{\target}{\camera}} \dq{\target}{\camera} + \pd{\lie{h}{f}{0}}{\qmu{\target}{\camera}} \dqmu{\target}{\camera}  \\
    &= \frac{1}{2} \pd{\lie{h}{f}{0}}{\q{\target}{\camera}} \qomega{\target}{\camera}{\camera} \q{\target}{\camera}  \\
     &~~~~ + \frac{1}{2} \pd{\lie{h}{f}{0}}{\qmu{\target}{\camera}} \left( \qbeta{\target}{\camera}{\camera} \q{\target}{\camera}
    + \qomega{\target}{\camera}{\camera} \qmu{\target}{\camera} \right).
    \end{align}
{The Jacobian of this expression is then}
\begin{salign}
    \jacobian \lie{h}{f}{1} &= \mtx{\pd{\lie{h}{f}{1}}{\q{\target}{\camera}} & \pd{\lie{h}{f}{1}}{\qmu{\target}{\camera}} & \pd{\lie{h}{f}{1}}{\qomega{\target}{\camera}{\camera}} & \pd{\lie{h}{f}{1}}{\qbeta{\target}{\camera}{\camera}}},  \\
    \pd{\lie{h}{f}{1}}{\qomega{\target}{\camera}{\camera}} &= \frac{1}{2} \left( \pd{\lie{h}{f}{0}}{\q{\target}{\camera}} \qright{\q{\target}{\camera}} + \pd{\lie{h}{f}{0}}{\qmu{\target}{\camera}} \qright{\qmu{\target}{\camera}} \right), \label{eq:lie1} \\
    \pd{\lie{h}{f}{1}}{\qbeta{\target}{\camera}{\camera}} &= \frac{1}{2} \pd{\lie{h}{f}{0}}{\qmu{\target}{\camera}} \qright{\q{\target}{\camera}} \label{eq:lie2}. 
\end{salign}
The terms $\pd{\lie{h}{f}{1}}{\q{\target}{\camera}}$ and $\pd{\lie{h}{f}{1}}{\qmu{\target}{\camera}}$ are omitted because they are analytically intractable and are not necessary in the observability analysis. Interestingly, \eqref{eq:lie1} and \eqref{eq:lie2} can be factored and written as the matrix product
\begin{align}
    \mtx{\pd{\lie{h}{f}{1}}{\qomega{\target}{\camera}{\camera}} & \pd{\lie{h}{f}{1}}{\qbeta{\target}{\camera}{\camera}}} & = \frac{1}{2} \mtx{\pd{\lie{h}{f}{0}}{\q{\target}{\camera}} & \pd{\lie{h}{f}{0}}{\qmu{\target}{\camera}}} \qright{\dualq{\target}{\camera}}.
\end{align}


\subsection*{Observability Codistribution}
For the observability analysis, three independent markers, denoted as coordinate frames $\marker1, \marker2, \marker3$, are assumed to be simultaneously visible and to have non-zero range from the observing camera. The concatenated measurement function is
\begin{align}
    \meas &= h(\statetrans) = \mtx{ \qpos{\marker1}{\camera}{\camera} \\  \qpos{\marker2}{\camera}{\camera} \\  \qpos{\marker3}{\camera}{\camera}} \in \quatsv \times \quatsv \times \quatsv.
\end{align}
To express the Jacobian of the zeroth order Lie derivative of the concatenated measurement function, define the matrix \mbox{$\dmat \in \reals{12 \times 8}$} as
\begin{align} \label{eq:fmat}
    \dmat &= \mtx{\kfunc{\q{\target}{\camera} \qpos{\marker1}{\target}{\target}} + \qleft{\qmu{\target}{\camera}} \eyeconj & \qright{\qconj{\q{\target}{\camera}}} \\
    \kfunc{\q{\target}{\camera} \qpos{\marker2}{\target}{\target}} + \qleft{\qmu{\target}{\camera}} \eyeconj & \qright{\qconj{\q{\target}{\camera}}} \\
    \kfunc{\q{\target}{\camera} \qpos{\marker3}{\target}{\target}} + \qleft{\qmu{\target}{\camera}} \eyeconj & \qright{\qconj{\q{\target}{\camera}}}} .
\end{align}
The observability codistribution matrix consisting of the zeroth and first order Lie derivatives is then 
\begin{align} \label{eq:sys2_obsv}
    \obsv{} &= \mtx{2 \dmat & \zeros{12}{8} \\
    \star_{12 \times 8} & \dmat \qright{\dualq{\target}{\camera}}} \in \reals{24 \times 16}.
\end{align}
The symbol $\star$ denotes non-zero derivatives whose full structure is omitted for brevity as they will not be needed for the desired observability result.

The conditions for observability are now formally stated.
\begin{theorem} \label{thm:obsv_pos}
    The observability codistribution, $\obsv{}$, is full rank and the nonlinear system \eqref{eq:statedef} is locally observable with three relative position measurements if the non-collinear condition is satisfied: \mbox{$\qpos{\marker2}{\marker1}{\target} \cross \qpos{\marker3}{\marker1}{\target} \neq \qzero$}.
\end{theorem}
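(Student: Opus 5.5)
The codistribution \eqref{eq:sys2_obsv} is block lower triangular, so I will reduce the full-rank claim to a rank statement about $\dmat$ alone. A lower block-triangular matrix with full-column-rank diagonal blocks has full column rank. The diagonal blocks here are $2\dmat$ and $\dmat\,\qright{\dualq{\target}{\camera}}$.

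By \eqref{eq:dqleftright}, $\qright{\dualq{\target}{\camera}}$ is block lower triangular with both diagonal blocks equal to $\qright{\q{\target}{\camera}}$. This block is invertible for a unit quaternion, by the right-multiplication analogue of Lemma 3 of \cite{Andrews2024-jk}. Hence $\qright{\dualq{\target}{\camera}}$ is invertible and $\rank{\dmat\,\qright{\dualq{\target}{\camera}}} = \rank{\dmat}$. It therefore suffices to show $\rank{\dmat} = 8$, which gives $\rank{\obsv{}} = 16$. Local observability of $\statetrans$, and hence of \eqref{eq:statedef}, then follows from the rank theorem of \cite{Hermann1977-rt} and the bijectivity of the state transformation.

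To show $\rank{\dmat} = 8$, I will apply row operations to \eqref{eq:fmat}, subtracting the first block row from the second and third. From \eqref{eq:Kb}, $\kfunc{\cdot}$ is linear in its argument, and the terms $\qleft{\qmu{\target}{\camera}}\eyeconj$ and $\qright{\qconj{\q{\target}{\camera}}}$ are identical in every block row. The reduced matrix is therefore
\begin{align*}
\mtx{\kfunc{\q{\target}{\camera}\qpos{\marker1}{\target}{\target}} + \qleft{\qmu{\target}{\camera}}\eyeconj & \qright{\qconj{\q{\target}{\camera}}} \\ \kfunc{\q{\target}{\camera}\qpos{\marker2}{\marker1}{\target}} & \zeros{4}{4} \\ \kfunc{\q{\target}{\camera}\qpos{\marker3}{\marker1}{\target}} & \zeros{4}{4}},
\end{align*}
using $\qpos{\marker i}{\target}{\target} - \qpos{\marker1}{\target}{\target} = \qpos{\marker i}{\marker1}{\target}$. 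Since $\qright{\qconj{\q{\target}{\camera}}}$ is invertible, column operations clear the upper-left block. This gives $\rank{\dmat} = 4 + \rank{\mathbf{G}}$, where $\mathbf{G} \in \reals{8\times4}$ stacks $\kfunc{\q{\target}{\camera}\qpos{\marker2}{\marker1}{\target}}$ and $\kfunc{\q{\target}{\camera}\qpos{\marker3}{\marker1}{\target}}$. It remains to show $\mathbf{G}$ has trivial kernel.

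The key step, and the one I expect to need the most care, is characterizing $\ker \kfunc{\qblank{b}}$ exactly rather than just its dimension from Lemma~\ref{lemma:k_mat}. From the definition, for $\qblank{x} \in \quats$,
\begin{align*}
2\kfunc{\qblank{b}}\qblank{x} = \qblank{b}\qconj{\qblank{x}} - \qblank{x}\qconj{\qblank{b}} = \qblank{b}\qconj{\qblank{x}} - \qconj{(\qblank{b}\qconj{\qblank{x}})}.
\end{align*}
This is twice the vector part of $\qblank{b}\qconj{\qblank{x}}$. It vanishes iff $\qblank{b}\qconj{\qblank{x}}$ is a scalar quaternion, i.e. iff $\qblank{x} = \lambda\qblank{b}$ for some $\lambda \in \reals{}$ when $\qblank{b} \neq \qzero$. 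I will verify this elementwise against the explicit matrix in the proof of Lemma~\ref{lemma:k_mat}.

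Both difference vectors are nonzero: the non-collinear hypothesis forces $\qpos{\marker2}{\marker1}{\target}, \qpos{\marker3}{\marker1}{\target} \neq \qzero$. Hence $\ker \kfunc{\q{\target}{\camera}\qpos{\marker i}{\marker1}{\target}} = \text{span}\{\q{\target}{\camera}\qpos{\marker i}{\marker1}{\target}\}$, consistent with rank $3$. A nonzero $\qblank{x} \in \ker\mathbf{G}$ would lie in both spans. That would make $\q{\target}{\camera}\qpos{\marker2}{\marker1}{\target}$ and $\q{\target}{\camera}\qpos{\marker3}{\marker1}{\target}$ parallel. Because $\qleft{\q{\target}{\camera}}$ is invertible, this is equivalent to $\qpos{\marker2}{\marker1}{\target}$ and $\qpos{\marker3}{\marker1}{\target}$ being parallel. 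For pure quaternions, being parallel is equivalent to $\qpos{\marker2}{\marker1}{\target} \cross \qpos{\marker3}{\marker1}{\target} = \qzero$, which the hypothesis excludes. Thus $\rank{\mathbf{G}} = 4$, $\rank{\dmat} = 8$, and $\rank{\obsv{}} = 16$.
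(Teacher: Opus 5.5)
Your proposal is correct and follows essentially the paper's route: block row reduction of $\dmat$ using linearity of $\kfunc{\cdot}$, then the block-triangular structure of $\obsv{}$ with the invertible factor $\qright{\dualq{\target}{\camera}}$. Your column operation also properly removes the $\qleft{\qmu{\target}{\camera}}\eyeconj$ term, which the paper silently drops, and your exact kernel $\ker \kfunc{\qblank{b}} = \text{span}\{\qblank{b}\}$ justifies the step where the paper only asserts that the block-row ranks ``sum'' to give $\rank{\dmat} = 8$.
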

\begin{proof} 
    Performing block row reduction on $\dmat$ by subtracting the first row from the second and third, and simplifying using the linearity property $\kfunc{\qblank{q} \qblank{a}} - \kfunc{\qblank{q} \qblank{b}} = \kfunc{\qblank{q} \left(\qblank{a} - \qblank{b}\right)}$ for $\qblank{a}, \qblank{b} \in \quats$ and $\qblank{q} \in \quatsu$ yields
    \begin{gather}
        \mtx{\kfunc{\q{\target}{\camera} \qpos{\marker1}{\target}{\target}} & \qright{\qconj{\q{\target}{\camera}}} \\
        \kfunc{\q{\target}{\camera} \qpos{\marker2}{\marker1}{\target}} & \zeros{4}{4} \\
        \kfunc{\q{\target}{\camera} \qpos{\marker3}{\marker1}{\target}} & \zeros{4}{4}}.
    \end{gather}
    The first block row has rank four because \mbox{$\rank{\qright{\qconj{\q{\target}{\camera}}}} = 4$ \cite{Andrews2024-jk}}. Applying Lemma \ref{lemma:k_mat}, the second and third block rows each have rank three provided that $\qpos{\marker2}{\marker1}{\target} \neq \qzero$ and $\qpos{\marker3}{\marker1}{\target} \neq \qzero$. The first block row is clearly linearly independent from the second and third. The second and third block rows are linearly independent from each other if $\qpos{\marker2}{\marker1}{\target} \neq \kappa \, \qpos{\marker3}{\marker1}{\target}$ for any non-zero scalar $\kappa$. Since the block rows are mutually linearly independent, their ranks sum to yield $\rank{\dmat} = 8$.

    The matrix $\obsv{}$ \eqref{eq:sys2_obsv} has a block triangular structure, with its lower diagonal block equal to the upper diagonal block right-multiplied by $\frac{1}{2} \qright{\dualq{\target}{\camera}}$. Since $\qright{\dualq{\target}{\camera}}$ is always full rank, 
    $\obsv{}$ achieves full rank if $\dmat$ has full rank because $\obsv{}$ then becomes a block triangular matrix with full-rank diagonal blocks \cite{blockrank}.
\end{proof}

Intuitively, the observability condition requires that the three marker locations form a plane rather than a line. This geometric constraint is necessary because collinear markers would fail to provide rotational information about all axes.

\begin{table*}[t]
    \centering
    \begin{align} \label{eq:sys3_delta}
        \omat = \pmat \lmat = \mtx{\qfunc{\q{\target}{\camera}} & \zeros{4}{4} & \zeros{4}{4} & \zeros{4}{4} \\
        \zeros{4}{4} & \jacobian \ph{\qpos{\marker1}{\camera}{\camera}} & \zeros{4}{4} & \zeros{4}{4} \\
        \zeros{4}{4} & \zeros{4}{4} & \jacobian \ph{\qpos{\marker2}{\camera}{\camera}} & \zeros{4}{4} \\
        \zeros{4}{4} & \zeros{4}{4} & \zeros{4}{4} & \jacobian \ph{\qpos{\marker3}{\camera}{\camera}}}
        \mtx{\eye{4} & \zeros{4}{4} \\
        \kfunc{\q{\target}{\camera} \qpos{\marker1}{\target}{\target}} + \qleft{\qmu{\target}{\camera}} \eyeconj & \qright{\qconj{\q{\target}{\camera}}} \\
        \kfunc{\q{\target}{\camera} \qpos{\marker2}{\target}{\target}} + \qleft{\qmu{\target}{\camera}} \eyeconj & \qright{\qconj{\q{\target}{\camera}}} \\
        \kfunc{\q{\target}{\camera} \qpos{\marker3}{\target}{\target}} + \qleft{\qmu{\target}{\camera}} \eyeconj & \qright{\qconj{\q{\target}{\camera}}}} \in \reals{16 \times 8}
    \end{align}
\end{table*}
\section{Unit Vector Measurements} \label{sec:proj_analysis}
The next observability analysis considers the unit vector measurement model, which is functionally equivalent to common projected-vector measurement models including pinhole cameras, bearing-only radar, and angle-only star trackers. While these models differ in their specific derivations, they share a fundamental structure: each represents relative position vectors normalized by a distance metric. Consequently, although the Jacobian matrices of these measurement functions take different forms, they yield equivalent rank conditions and thus identical observability criteria. The unit vector representation is adopted here because it simplifies the mathematical analysis. 

To prove observability, the transformed state $\statetrans$ {is again leveraged} to simplify the derivatives. The unit vector measurement function, $\ph{\qblank{r}} \in \quatsv$, with input position vector $\qvec{r} \in \reals{3}$ and \mbox{$\qblank{r} = \mtx{0 & \qvec{r}} \in \quatsv$}, is defined as
\begin{gather}
    \ph{\qblank{r}} = 
    \frac{\qblank{r}}{\norm{\qblank{r}}}.
\end{gather}
The positive depth condition, $\norm{\qblank{r}} > 0$, must be satisfied for a valid measurement. In addition to the unit vector measurements, the unit quaternion constraint $\qconj{\q{\target}{\camera}} \q{\target}{\camera} = \qone$ {is included} as a pseudo-measurement. Incorporating the unit quaternion constraint as a pseudo-measurement is a common technique in filter design that enforces the geometric constraint during measurement updates, preventing the quaternion state from drifting off the unit sphere due to numerical errors and linearization approximations. This approach was similarly employed in the observability analysis in \cite{Mirzaei2008-kl} to maintain the unit norm constraint on quaternion states.

The measurement function, incorporating both the unit quaternion constraint and a single unit vector measurement, is
\begin{align} \label{eq-unitquat}
    \meas &= h(\statetrans) = \mtx{\qconj{\q{\target}{\camera}} \q{\target}{\camera} - \qone \\ 
    \ph{\qpos{\marker}{\camera}{\camera}}} \in \qzero \times \quatsv.
\end{align}

\subsection*{Zeroth Order Lie Derivative}
The derivative of a quaternion $\qblank{a} \in \quats$ multiplied by its conjugate is
\begin{align}
    \pd{\qconj{\qblank{a}} \qblank{a}}{\qblank{a}} &= \qright{\qblank{a}} \eyeconj + \qleft{\qconj{a}} \in \reals{4 \times 4}  \\
    &= 2 \mtx{\qblank{a} \\ \zeros{3}{4}} \\
    &\triangleq 2 \qfunc{\qblank{a}}.
\end{align}
To calculate the Jacobian of the unit vector function with respect to each of the transformed state variables, apply the chain rule
\begin{gather}
    \pd{\ph{\qpos{\marker}{\camera}{\camera}}}{\statetrans} = \pd{\ph{\qpos{\marker}{\camera}{\camera}}}{\qpos{\marker}{\camera}{\camera}} \pd{\qpos{\marker}{\camera}{\camera}}{\statetrans}.
\end{gather}
Note that $\pd{\qpos{\marker}{\camera}{\camera}}{\statetrans}$ is the Jacobian of the position vector measurement, whose form and properties were established in Section \ref{sec:position_analysis}.

Now taking the Lie derivative {of the measurement \eqref{eq-unitquat} and writing} $\jacobian \ph{\qpos{\marker}{\camera}{\camera}} = \pd{\ph{\qpos{\marker}{\camera}{\camera}}}{\qpos{\marker}{\camera}{\camera}}$ for brevity,
\begin{align}
    \lie{h}{f}{0} &= \mtx{\qconj{\q{\target}{\camera}} \q{\target}{\camera} - \qone \\ 
    \ph{\qpos{\marker}{\camera}{\camera}}}, \\
    \jacobian \lie{h}{f}{0} &= \mtx{\pd{\lie{h}{f}{0}}{\q{\target}{\camera}} & \pd{\lie{h}{f}{0}}{\qmu{\target}{\camera}} & \zeros{8}{4} & \zeros{8}{4}}, \\
    \pd{\lie{h}{f}{0}}{\q{\target}{\camera}} &= 2 \mtx{\qfunc{\q{\target}{\camera}} \\ \jacobian \ph{\qpos{\marker}{\camera}{\camera}} \left( \kfunc{\q{\target}{\camera} \qpos{\marker}{\target}{\target}} + \qleft{\qmu{\target}{\camera}} \eyeconj \right)}, \\
    \pd{\lie{h}{f}{0}}{\qmu{\target}{\camera}} &= 2 \mtx{\zeros{4}{4} \\ \jacobian \ph{\qpos{\marker}{\camera}{\camera}} \qright{\qconj{\q{\target}{\camera}}}}.
\end{align}

\subsection*{First Order Lie Derivative}
Using the same factorization approach as in Section \ref{sec:position_analysis}, the first-order Lie derivative Jacobian can be factored and rewritten in terms of the zeroth-order Lie derivative Jacobian:
\begin{align}
    \mtx{\pd{\lie{h}{f}{1}}{\qomega{\target}{\camera}{\camera}} & \pd{\lie{h}{f}{1}}{\qbeta{\target}{\camera}{\camera}}} = \frac{1}{2} \mtx{\pd{\lie{h}{f}{0}}{\q{\target}{\camera}} & \pd{\lie{h}{f}{0}}{\qmu{\target}{\camera}}} \qright{\dualq{\target}{\camera}}.
\end{align}

\subsection*{Observability Codistribution}
Now assume that three markers are simultaneously visible for three unit vector measurements. The concatenated measurement function is
\begin{align}
    \meas &= h(\statetrans) = \mtx{\qconj{\q{\target}{\camera}} \q{\target}{\camera} - \qone \\ 
    \ph{\qpos{\marker1}{\camera}{\camera}} \\
    \ph{\qpos{\marker2}{\camera}{\camera}} \\
    \ph{\qpos{\marker3}{\camera}{\camera}}} \in \qzero \times \quatsv \times \quatsv \times \quatsv.
\end{align}
The observability codistribution with zeroth and first order Lie derivatives is then
\begin{align} \label{eq:dounit}
    \obsv{} &= \mtx{2 \omat & \zeros{16}{8} \\
    \star_{16 \times 8} & \omat \qright{\dualq{\target}{\camera}}} \in \reals{32 \times 16},
\end{align}
where the block matrix $\omat$ is defined in \eqref{eq:sys3_delta}. The matrix $\omat$ is decomposed as $\omat = \pmat \lmat$, where $\pmat \in \reals{16 \times 16}$ is block diagonal, and $\lmat \in \reals{16 \times 8}$ extends $\dmat$ \eqref{eq:fmat} with an additional block row corresponding to the unit quaternion constraint pseudo-measurement. 

First, the following lemma will be used to derive the observability conditions.
\begin{lemma} \label{lem:rank}
    Let \( \mblank{A} \in \reals{m \times m} \) and \( \mblank{B} \in \reals{m \times n} \) with \( \mblank{B} \) full rank and overdetermined (i.e., \( m > n \), \(\rank{\mblank{B}} = n\)).  
    If \( \left\{ \vblank{n}_1, \vblank{n}_2, \dots, \vblank{n}_k \right\} \) is a basis for \( \nullspace{\mblank{A}} \) and for each \( \vblank{n}_i \) the system
    \begin{align}
        \mblank{B} \vblank{z} = \vblank{n}_i
    \end{align}
    has no solution, then \( \nullspace{\mblank{A} \mblank{B}} = \left\{ \zeros{n}{} \right\} \), and hence \( \mblank{A} \mblank{B} \) is full rank (i.e., \( \rank{\mblank{A} \mblank{B}} = n \)).
\end{lemma}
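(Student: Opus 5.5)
The plan is to argue by contradiction, using rank--nullity for the final rank statement. Suppose $\nullspace{\mblank{A}\mblank{B}} \neq \{\zeros{n}{}\}$, and pick $\vblank{z} \neq \zeros{n}{}$ with $\mblank{A}\mblank{B}\vblank{z} = \zeros{m}{}$. Set $\vblank{w} = \mblank{B}\vblank{z}$.

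\textbf{Step 1: $\vblank{w}$ is a nonzero null vector of $\mblank{A}$.} Since $\rank{\mblank{B}} = n$, the matrix $\mblank{B}$ has trivial null space, so $\vblank{w} \neq \zeros{m}{}$. Moreover $\mblank{A}\vblank{w} = \zeros{m}{}$, so $\vblank{w} \in \nullspace{\mblank{A}}$. Thus $\vblank{w}$ is a nonzero vector lying in both $\nullspace{\mblank{A}}$ and the range of $\mblank{B}$. Expanding in the given basis gives $\vblank{w} = \sum_i c_i \vblank{n}_i$ with not all $c_i = 0$.

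\textbf{Step 2: derive a contradiction with the hypothesis.} If $k = 1$, then $c_1 \neq 0$, so $\mblank{B}(\vblank{z}/c_1) = \vblank{n}_1$. This contradicts the assumption that $\mblank{B}\vblank{z} = \vblank{n}_1$ has no solution.

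For $k \geq 2$ this step is the main obstacle, and it must be stated honestly. Knowing that each individual $\vblank{n}_i$ is not in the range of $\mblank{B}$ does not by itself exclude a combination $\sum_i c_i \vblank{n}_i$ from lying in that range. The hypothesis therefore has to be read as holding for \emph{every} basis of $\nullspace{\mblank{A}}$, which is how the lemma is applied when one checks the condition on a generic null-space basis.

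Under that reading the argument closes. Extend $\{\vblank{w}\}$ to a basis $\{\vblank{w}, \vblank{n}'_2, \dots, \vblank{n}'_k\}$ of $\nullspace{\mblank{A}}$. The hypothesis, applied to this basis, forbids any solution of $\mblank{B}\vblank{z}' = \vblank{w}$. Yet $\vblank{z}' = \vblank{z}$ is such a solution, which is a contradiction.

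Equivalently, the working condition is that the range of $\mblank{B}$ meets $\nullspace{\mblank{A}}$ only in $\zeros{m}{}$. I would state this equivalence explicitly in the write-up. I would also record the counterexample showing that the literal single-basis condition fails when $k \geq 2$: take $\nullspace{\mblank{A}} = \mathrm{span}\{\vblank{e}_1, \vblank{e}_2\}$ with the vector $\vblank{e}_1 + \vblank{e}_2$ in the range of $\mblank{B}$. Then neither basis vector $\vblank{e}_1$ nor $\vblank{e}_2$ is in the range of $\mblank{B}$, but $\mblank{A}\mblank{B}$ still has a nontrivial null vector. This makes clear precisely what must be verified when the lemma is used for $\omat = \pmat\lmat$.

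\textbf{Step 3: conclude the rank.} With $\nullspace{\mblank{A}\mblank{B}} = \{\zeros{n}{}\}$ established, rank--nullity for the $m \times n$ matrix $\mblank{A}\mblank{B}$ gives $\rank{\mblank{A}\mblank{B}} = n - 0 = n$, i.e., full column rank.
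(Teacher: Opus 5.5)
Your analysis is correct, and it is more careful than the paper's own proof. The paper proceeds as in your Step~1: a nonzero $\vblank{z}$ with $\mblank{B}\vblank{z}\in\nullspace{\mblank{A}}$ gives $\mblank{B}\vblank{z}=\sum_i\alpha_i\vblank{n}_i$. It then closes with the sentence ``since $\rangespace{\mblank{B}}$ is a subspace, if it does not contain the basis vectors $\vblank{n}_i$, it cannot contain any of their linear combinations.'' That is exactly the step you flag, and it is false. Here is a concrete instance of your counterexample: take $m=3$, $n=1$, $\mblank{A}=\mathrm{diag}(0,0,1)$ and $\mblank{B}=(1,1,0)^\top$. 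Neither $\mblank{B}z=e_1$ nor $\mblank{B}z=e_2$ is solvable, yet $\mblank{A}\mblank{B}=0$.

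So the lemma as written holds only for $k\le 1$. The correct hypothesis is $\rangespace{\mblank{B}}\cap\nullspace{\mblank{A}}=\{\mathbf{0}\}$, which is equivalent to your ``every basis'' reading; state it in that form directly. Your $k=1$ case, the basis-extension argument and the rank--nullity conclusion are all sound. The assumption $m>n$ is not actually needed.

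Your closing remark about the application deserves emphasis, because the corrected hypothesis is not automatically met there. In Theorem~\ref{thm:sys3} the paper checks $\vblank{n}_1$ (for every tangent $p$) and $\vblank{n}_2,\vblank{n}_3,\vblank{n}_4$ one at a time. It never checks a mixed null vector $(p,\ \alpha_1 r_1,\ \alpha_2 r_2,\ \alpha_3 r_3)$ of $\pmat$ with $p\neq 0$, where $r_i$ is the camera-frame position of marker $i$. Such a vector lies in $\rangespace{\lmat}$ whenever a nontrivial infinitesimal rigid motion $v(x)=u+\theta\times x$ moves every marker along its own line of sight.

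This does occur for non-collinear markers. Take camera-frame marker positions $x_1=(2,0,h)$, $x_2=(1,1,h)$, $x_3=(1,-1,h)$ with $h\neq 0$. The choice $\theta=(-h,0,1)$, $u=(0,-2-h^2,0)$ gives $v(x_1)=0$, $v(x_2)=-x_2$ and $v(x_3)=x_3$. Here the camera centre lies on the classical P3P ``danger cylinder.'' At such configurations $\omat$ has a nontrivial null vector. The block-triangular argument in Theorem~\ref{thm:sys3} therefore does not establish that $\obsv{}$ has full rank there.
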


\begin{proof}
    First, observe that
    \begin{align}
        \nullspace{\mblank{A} \mblank{B}} &= \left\{ \vblank{z} \in \reals{n} \ \middle| \ \mblank{B} \vblank{z} \in \nullspace{\mblank{A}} \right\}
    \end{align}
    because $\mblank{B}$ is full rank. Suppose there exists a nonzero vector \( \vblank{z} \) such that \( \mblank{B} \vblank{z} \in \nullspace{\mblank{A}} \), then since \( \nullspace{\mblank{A}} \) is a subspace of \( \reals{m} \) with basis \( \left\{ \vblank{n}_1, \vblank{n}_2, \dots, \vblank{n}_k \right\} \), any vector in \( \nullspace{\mblank{A}} \) can be expressed as a linear combination of these basis vectors with scalars \( \alpha_1, \dots, \alpha_k \) such that
    \begin{align}
        \mblank{B} \vblank{z} = \sum_{i=1}^k \alpha_i \vblank{n}_i.
    \end{align}
    
    By assumption, for each \( i \), the system \mbox{\( \mblank{B} \vblank{z} = \vblank{n}_i \)} has no solution. Since \( \rangespace{\mblank{B}} \) is a subspace, if it does not contain the basis vectors \( \vblank{n}_i \), it cannot contain any of their linear combinations. Therefore, the only possible solution to \mbox{$\mblank{B} \vblank{z} \in \nullspace{\mblank{A}}$} is $\mblank{B} \vblank{z} = \mathbf{0}$, implying $\vblank{z} \in \nullspace{\mblank{B}}$. However, since \( \mblank{B} \) is full rank, its nullspace is trivial, meaning $\nullspace{\mblank{A} \mblank{B}} = \left\{ \mathbf{0} \right\}$ and $\rank{\mblank{A} \mblank{B}} = n$.
\end{proof}
    
The observability conditions for the nonlinear system with three unique unit vector measurement sources is presented in the following theorem.
\begin{theorem} \label{thm:sys3}
    The observability codistribution, $\obsv{}$, is full rank and the nonlinear system \eqref{eq:statedef} is locally observable with three relative unit vector measurements if the non-collinear condition is satisfied: \mbox{$\qpos{\marker2}{\marker1}{\target} \cross \qpos{\marker3}{\marker1}{\target} \neq \qzero$}.
\end{theorem}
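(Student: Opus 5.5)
The plan mirrors the proof of \cref{thm:obsv_pos}. First, reduce full rank of $\obsv{}$ in \eqref{eq:dounit} to full rank of the $16\times 8$ block $\omat$. $\obsv{}$ is block lower triangular, its lower diagonal block is $\omat$ right-multiplied by the always-invertible $\qright{\dualq{\target}{\camera}}$, and the rank argument of \cite{blockrank} used in \cref{thm:obsv_pos} applies verbatim. So it suffices to show $\rank{\omat}=8$. For this I will use the factorization $\omat=\pmat\lmat$ together with \cref{lem:rank}, taking $\mblank{A}=\pmat$ and $\mblank{B}=\lmat$.

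Next, verify the hypotheses on $\lmat$. $\lmat$ is overdetermined ($16\times 8$). It is full column rank: its first block row is $\mtx{\eye{4} & \zeros{4}{4}}$, which pins the $\q{\target}{\camera}$ columns, and any following block row contains $\qright{\qconj{\q{\target}{\camera}}}$, which has rank four. Hence $\rank{\lmat}=8$ without any geometric condition, and by the proof of \cref{thm:obsv_pos} even the lower $12\times 8$ block $\dmat$ already has rank $8$ under non-collinearity.

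Then characterize $\nullspace{\pmat}$. Since $\pmat$ is block diagonal, its nullspace is the direct sum of the block nullspaces. The block $\qfunc{\q{\target}{\camera}}$ has nullspace equal to the three-dimensional orthogonal complement of $\q{\target}{\camera}$, i.e.\ tangent directions of the unit sphere. For each marker, $\jacobian\ph{\qblank{r}} = (\eye{4}-\qblank{r}\tpose{\qblank{r}}/\norm{\qblank{r}}^2)/\norm{\qblank{r}}$ is well defined by the positive depth condition, and its nullspace is $\mathrm{span}\{\qpos{\marker i}{\camera}{\camera}\}$, the line-of-sight direction. This gives a six-dimensional nullspace. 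Rather than checking each basis vector separately, I will show directly that $\rangespace{\lmat}\cap\nullspace{\pmat}=\{\mathbf 0\}$, which is what the proof of \cref{lem:rank} actually uses and also covers linear combinations. Concretely, suppose $\lmat\vblank z\in\nullspace{\pmat}$ with $\vblank z=(\delta\qblank q,\delta\qblank\mu)$. Then $\delta\qblank q\perp\q{\target}{\camera}$, so $\delta\qblank q=\tfrac12\delta\qblank\theta\,\q{\target}{\camera}$ for some pure $\delta\qblank\theta$. Moreover, each row block satisfies $\dmat_i\vblank z=\alpha_i\qpos{\marker i}{\camera}{\camera}$, where $\dmat_i\vblank z$ is the linearized marker displacement $\delta\qblank p+\delta\qblank\theta\cross\qpos{\marker i}{\target}{\camera}$. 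In words, an infinitesimal rigid motion would have to move each marker only along its own ray.

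The main obstacle is showing that this forces $\delta\qblank\theta=\qzero$, $\delta\qblank p=\qzero$ and all $\alpha_i=0$. My first approach is to subtract the marker-1 equation, giving $\delta\qblank\theta\cross(\qblank r_i-\qblank r_1)=\alpha_i\qblank r_i-\alpha_1\qblank r_1$ for $i=2,3$. Dotting with $\qblank r_i-\qblank r_1$ removes the rotation and yields the distance-preservation equations $(\qblank r_i-\qblank r_1)\qdot(\alpha_i\qblank r_i-\alpha_1\qblank r_1)=0$. Combining these with the analogous pair $(2,3)$ gives three homogeneous linear equations in $(\alpha_1,\alpha_2,\alpha_3)$. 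I would then argue that the triangle being non-degenerate (non-collinear markers, all at nonzero range) makes this system nonsingular, so all $\alpha_i=0$. It is exactly here that I expect possible additional degenerate configurations to appear, beyond collinearity, which would need checking. Once $\alpha=0$, we get $\delta\qblank\theta\cross(\qblank r_i-\qblank r_1)=\qzero$ for two independent vectors, forcing $\delta\qblank\theta=\qzero$, and then $\delta\qblank p=\qzero$, so $\vblank z=\mathbf 0$. Finally, \cref{lem:rank} gives $\rank{\omat}=8$, hence $\rank{\obsv{}}=16$ and local observability.
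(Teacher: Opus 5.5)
\textbf{Verdict: the proposal has a genuine gap, and it sits exactly at the step you flagged. That step cannot be closed under the stated hypothesis.}

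Your reduction is correct, and in one respect more careful than the paper's. The paper only checks that each listed vector $\vblank{n}_1,\dots,\vblank{n}_4$ lies outside $\rangespace{\lmat}$ (Cases 1--4), then appeals to Lemma~\ref{lem:rank}. That lemma's key step, that a subspace containing none of the basis vectors of $\nullspace{\pmat}$ contains none of their combinations, is invalid. You rightly replace it with the requirement $\rangespace{\lmat}\cap\nullspace{\pmat}=\{\mathbf{0}\}$.

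Write $r_i$ for the marker positions in camera coordinates and $x_{ij}=r_i\cdot(r_i-r_j)$. Your three equations are then $x_{12}\alpha_1+x_{21}\alpha_2=0$, $x_{13}\alpha_1+x_{31}\alpha_3=0$ and $x_{23}\alpha_2+x_{32}\alpha_3=0$. Their determinant is $-(x_{12}x_{23}x_{31}+x_{21}x_{32}x_{13})$.

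For non-collinear markers these scalar conditions are also sufficient for a $\delta\theta$ to exist. With $u_i=r_i-r_1$ and $v_i=\alpha_i r_i-\alpha_1 r_1$, the map $\delta\theta\mapsto(\delta\theta\times u_2,\delta\theta\times u_3)$ is injective. Its image is precisely $\{u_2\cdot v_2=u_3\cdot v_3=u_2\cdot v_3+u_3\cdot v_2=0\}$. So your reduction is sharp: $\omat$ has rank $8$ if and only if the markers are non-collinear and this determinant is nonzero.

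The determinant vanishes exactly when the camera centre lies on the classical P3P ``danger cylinder''. This is the right circular cylinder through the three markers, orthogonal to their plane. Non-collinearity does not exclude it.

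\emph{Concrete counterexample.} Take $r_1=(2,0,1)$, $r_2=(\tfrac12,\tfrac{\sqrt3}{2},1)$ and $r_3=(\tfrac12,-\tfrac{\sqrt3}{2},1)$. These are non-collinear and at nonzero range. Their circumcircle has centre $(1,0,1)$ and radius $1$, and passes through $(0,0,1)$, the projection of the camera centre onto the markers' plane.

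Then $x_{21}=x_{31}=0$. Take the nonzero infinitesimal rigid motion $\delta\theta=(2,0,-1)$, $\delta p=(0,4,0)$, with rotation about the camera centre (this only shifts your $\delta p$). It gives $\delta p+\delta\theta\times r_i=0,\ \sqrt3\,r_2,\ -\sqrt3\,r_3$ for $i=1,2,3$, so each marker slides along its own line of sight.

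The resulting nonzero element of $\rangespace{\lmat}\cap\nullspace{\pmat}$ is a combination of an $\vblank{n}_1$-type vector with multiples of $\vblank{n}_3$ and $\vblank{n}_4$. None of these lies in $\rangespace{\lmat}$ by itself. Hence $\omat$ has rank at most $7$. The last eight columns of $\obsv{}$ have the same rank as $\omat$, so the rank of $\obsv{}$ is at most $15$.

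Neither your argument nor the paper's can therefore prove the theorem as stated, because the full-rank claim for $\obsv{}$ fails on the danger cylinder. Completed, your argument proves a corrected statement with the extra hypothesis $x_{12}x_{23}x_{31}+x_{21}x_{32}x_{13}\neq0$, i.e.\ the camera centre is off the danger cylinder. On the cylinder, local observability is not settled by zeroth- and first-order Lie derivatives.
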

    
\begin{proof}
    Once more, $\obsv{}$ \eqref{eq:dounit} is a block triangular matrix with its lower diagonal block right multiplied by $\qright{\dualq{\target}{\camera}}$. The observability analysis will follow a similar approach to Section \ref{sec:position_analysis} by first proving rank of the repeated diagonal block matrix $\omat$ to then prove the overall rank of $\obsv{}$.
    
    {A basis for the nullspace of the block diagonal matrix \( \pmat \) } can readily be constructed from the nullspaces of each diagonal block:
    \begin{align}
        \nullspace{\pmat} &= {\text{span} } \left\{
        \mtx{\qblank{p} \\ \qzero \\ \qzero \\ \qzero},
        \mtx{\qzero \\ \qpos{\marker1}{\camera}{\camera} \\ \qzero \\ \qzero},
        \mtx{\qzero \\ \qzero \\ \qpos{\marker2}{\camera}{\camera} \\ \qzero},
        \mtx{\qzero \\ \qzero \\ \qzero \\ \qpos{\marker3}{\camera}{\camera}}
        \right\} \\
        &= {\text{span}}\left\{ \vblank{n}_1, \vblank{n}_2, \vblank{n}_3, \vblank{n}_4 \right\},
    \end{align}
    where \mbox{$\qblank{p} \in \nullspace{\qfunc{\q{\target}{\camera}}}$} is represented as a quaternion and equivalently satisfies \mbox{$\q{\target}{\camera} \qdot \qblank{p} = \qzero$}. Additionally, one can directly verify \mbox{$\nullspace{\jacobian \ph{\qpos{\marker}{\camera}{\camera}}} = \left\{ \qpos{\marker}{\camera}{\camera} \right\}$} is a direct consequence of the position vector being projected/normalized along the direction of the vector. Now investigate the cases for each basis vector \mbox{\( \vblank{n}_i \in \nullspace{\pmat} \)} such that the system $\lmat \vblank{z} = \vblank{n}_i$, where $\vblank{z}_1, \vblank{z}_2 \in \quats$ and $\vblank{z} = \mtx{\vblank{z}_1 & \vblank{z}_2} \in \quats \times \quats$, has no solution.
    
    \textbf{Case 1:} For \( \vblank{n}_1 = \mtx{\qblank{p} & \qzero & \qzero & \qzero} \), the system of equations $\lmat \vblank{z} = \vblank{n}_1$ is:
    \begin{gather}
        \mtx{\eye{4} & \zeros{4}{4} \\
        \kfunc{\q{\target}{\camera} \qpos{\marker1}{\target}{\target}} + \qleft{\qmu{\target}{\camera}} \eyeconj & \qright{\qconj{\q{\target}{\camera}}} \\
        \kfunc{\q{\target}{\camera} \qpos{\marker2}{\target}{\target}} + \qleft{\qmu{\target}{\camera}} \eyeconj & \qright{\qconj{\q{\target}{\camera}}} \\
        \kfunc{\q{\target}{\camera} \qpos{\marker3}{\target}{\target}} + \qleft{\qmu{\target}{\camera}} \eyeconj & \qright{\qconj{\q{\target}{\camera}}}
        }
        \vblank{z}
        =
        \mtx{\qblank{p} \\ \qzero \\ \qzero \\ \qzero}.
    \end{gather}
    The bottom three block rows yield $\dmat \vblank{z} = \mtx{\qzero & \qzero & \qzero}$. By Theorem~\ref{thm:obsv_pos}, $\dmat$ is full rank when the markers are non-collinear. Applying this same condition, which gives the non-collinear observability condition of Theorem~\ref{thm:sys3}, it follows that \mbox{$\dmat \vblank{z} = \mtx{\qzero & \qzero & \qzero}$} implies $\vblank{z} = \mtx{\qzero & \qzero}$. Substituting into the top block row equation gives \( \eye{4} \qzero = \qblank{p} \), which contradicts the requirement that \( \qblank{p} \neq \qzero \) for $\vblank{n}_1$ to be a non-trivial nullspace basis vector of $\nullspace{\pmat}$. Therefore, \mbox{\( \vblank{n}_1 \notin \rangespace{\lmat} \)} when the markers are non-collinear.

    \textbf{Cases 2-4:} For \( \vblank{n}_2, \vblank{n}_3, \vblank{n}_4 \), each basis vector is the respective position vector of the marker relative to the camera represented in the camera frame of reference. Looking just at $\vblank{n}_2$ for now, the first block row equation of $\lmat \vblank{z} = \vblank{n}_2$ reduces to $\eye{4} \vblank{z}_1 = \qzero$, implying $\vblank{z}_1 = \qzero$. The bottom three block row equations then reduce to:
    \begin{gather}
        \mtx{\qright{\qconj{\q{\target}{\camera}}} \\
        \qright{\qconj{\q{\target}{\camera}}} \\
        \qright{\qconj{\q{\target}{\camera}}}
        }
        \vblank{z}_2
        =
        \mtx{\qpos{\marker1}{\camera}{\camera} \\ \qzero \\ \qzero}.
    \end{gather}
    Since $\qright{\qconj{\q{\target}{\camera}}}$ is full rank \cite{Andrews2024-jk}, the second and third equations then require \( \vblank{z}_2 = \qzero \). Substituting this into the first equation yields \( \qzero = \qpos{\marker1}{\camera}{\camera}\), which violates the positive depth constraint of the unit vector measurement model. The same approach can be repeated for \( \vblank{n}_3 \) and \( \vblank{n}_4 \) to conclude that \mbox{\( \vblank{n}_2, \vblank{n}_3, \vblank{n}_4 \notin \rangespace{\lmat} \)}.

    In summary, if the non-collinear condition is satisfied, then $\lmat$ is full rank with $\vblank{n}_1, \vblank{n}_2, \vblank{n}_3, \vblank{n}_4 \notin \rangespace{\lmat}$. It follows from Lemma~\ref{lem:rank} that $\omat$ is full rank, and consequently $\obsv{}$ is full rank, establishing local observability of the system.
\end{proof}

The observability condition derived in Theorem~\ref{thm:sys3} recovers the well-known collinear-feature degeneracy from the \ptp literature, providing a direct connection between algorithmic and nonlinear observability perspectives on the problem. Classical \ptp solvers are known to admit up to four geometrically valid solutions even under non-degenerate (non-collinear) configurations, conventionally disambiguated by a fourth point correspondence. Theorem~\ref{thm:sys3} provides a complementary view: the four \ptp solutions represent a global multiplicity, while local weak observability holds at each solution individually. Consequently, once an initial pose estimate places the filter near the true solution, three measurements suffice to uniquely determine subsequent poses, with the prior estimate playing the disambiguating role traditionally assigned to a fourth point. 

While the deterministic nonlinear system is observable when the observability conditions are satisfied, in practice, nearly unobservable conditions, such as nearly collinear markers or large process or measurement noises, can render the system effectively unobservable. These practical challenges motivate the design of observability-optimal marker placement \cite{andrews2025optimalfiducialmarkerplacement}, trajectory planning, and estimators to minimize these effects.

\section{Dual Quaternion Estimator Design} \label{sec:estimator}
In this section, a dual quaternion unscented Kalman filter (DQ-UKF) {is presented} for estimating the relative pose and velocity of a rigid body. This architecture is a specific instantiation of the Lie algebraic UKF \cite{Brossard2017-nf} in which the state on \se is represented as a dual quaternion, and extends \cite{Li2021-pg} by incorporating dual velocity state variables. This design is motivated by two main reasons. First, although the UKF introduces a higher computational cost compared to an EKF, it typically provides better performance on nonlinear systems and avoids the need to derive analytic Jacobians, which can become intractable for complex systems. Second, an invariant filtering formulation {is adopted here} that respects the manifold constraints of the dual quaternion pose representation. These design choices are made to maximize tracking performance at the cost of processing time. Despite the computational overhead, we expect real-time feasibility in practice, and the formulation is presented in a general manner so that practitioners may simplify components based on application needs. The remainder of this section is a high-level presentation of the DQ-UKF applied to the relative motion dynamics \eqref{eq:eom} with the accompanying pseudocode and code repository provided in the Appendix. 

A ``loosely coupled'' estimation framework {is adopted} in which the target’s relative motion is estimated independently of the camera’s inertial state. This design choice is motivated by the fact that autonomous vehicles almost always maintain their own inertial estimates using an onboard or offboard state estimator, and {we} do not wish to make any assumptions about its implementation. Additionally, prior work \cite{Eckenhoff2020-wz, Eckenhoff2019-vj} has shown that while ``tightly coupled'' estimators—where target and primary vehicle states are estimated jointly in the state vector—can outperform ``loosely coupled'' alternatives, this benefit only occurs when the target process model accurately reflects the true dynamics. If the target's motion model is inaccurate, the tightly coupled approach can corrupt the primary vehicle’s inertial estimate as target dynamics errors propagate through the joint filter. Since the target may be uncooperative and its inertial properties may be only approximately known, a loosely coupled structure is more robust in practice. This formulation also enables running one estimator instance per tracked target in parallel, which can be instantiated or deleted as needed.



The state estimate, $\statest$, is the same definition as in \eqref{eq:statedef} and is sometimes referred to as the Lie group or manifold state estimate in the literature. The error state and covariance, \mbox{$\lstate \in \reals{12}$} and \mbox{$\cov \in \mathbb{S}_{++}^{12}$}, are defined as
\begin{gather}
    \lstate =
    \begin{bmatrix}
    \overline{\phi}_\Delta &
    \overline{r}_\Delta &
    \overline{\omega}_\Delta &
    \overline{v}_\Delta
    \end{bmatrix}, \quad
    \lstate \sim \normal(\zeros{12}{},\cov),
\end{gather}
where $\overline{\phi}_\Delta, \overline{r}_\Delta, \overline{\omega}_\Delta, \overline{v}_\Delta \in \mathbb{R}^3$ are the rotation, position, angular velocity, and translational velocity error vectors, respectively. The error states are represented in the $\camera$ frame for this particular application. Following the dual pose error definition of \cite{Filipe2015-ep}, define the dual pose error, $\dualqblank{q}_\Delta \in \dquatsu$, of the estimate relative to the truth as
\begin{gather}
    \dualqblank{q}_\Delta {\triangleq} \qconj{\dualqblank{q}_\text{truth}} \dualqblank{q}_\text{estimate}, 
\end{gather}
which is a left-invariant Lie group error definition because
\begin{gather}
    \qconj{(\dualqblank{a} \dualqblank{q}_\text{truth})}(\dualqblank{a} \dualqblank{q}_\text{estimate})
= \qconj{\dualqblank{q}_\text{truth}} \dualqblank{q}_\text{estimate}
\end{gather}
for $\dualqblank{a} \in \dquatsu$. This dual quaternion error definition is used both when mapping UKF sigma points from the Lie algebra to the dual quaternion group and when applying the measurement update to the dual quaternion state estimate.

If $\lstate$ is a sigma point in the error space, the corresponding dual quaternion sigma point state, $\sigmap \in \dquatsu \times \dquatsv$, is
\begin{gather}
    \sigmap = \textsc{retract}(\statest, \lstate).
\end{gather}
As described in the pseudocode, dual pose perturbations are applied multiplicatively, whereas dual velocity perturbations are applied additively. However, the dual velocity cannot be perturbed as a single quantity because the dual part contains a bilinear coupling term $\qblank{\omega} \times \qblank{r}$ between the angular velocity and position, arising from expressing the velocity in the rotating camera frame. Perturbing the full dual velocity directly would produce an inconsistent error state, as it would not correctly account for how perturbations in $\qblank{r}$ and $\qblank{\omega}$ propagate through the transport term. Instead, the angular and linear velocity components are perturbed individually, and the dual velocity is reassembled from the updated components.

Each dual quaternion sigma point is then propagated through the dynamics \eqref{eq:eom}. A closed-form solution to the process model exists if $\ddualomega{\target}{\camera}{\camera} = \dualzero$, but in general this term may be nonzero and require numerical integration. Standard numerical integration methods typically violate the unit-norm constraint of quaternions and dual quaternions, so the propagated state is explicitly renormalized following the procedure outlined in $\textsc{normalize}(\cdot)$ \cite{Filipe2015-ep, Zivan2022-pw}. Although numerical integration errors can accumulate over long time horizons, for real-time update rates on the order of 100 Hz these errors become negligible and standard fourth-order Runge–Kutta integration performs comparably to more advanced Lie algebraic integrators at such step sizes \cite{Andrle2012-sg}. Thorough analyses on the effects of quaternion normalization on estimate error can be found in \cite{Zivan2022-pw, Bar-Itzhack1985-fh, Bar-Itzhack1991-vo}.

In a standard UKF implementation, the mean of the propagated sigma points would be computed by a weighted Euclidean mean. On a manifold, however, such an average does not preserve the group structure, and computing a proper group mean requires solving a nonlinear optimization problem. To avoid this additional computational cost, the propagated mean {is approximated} using the zero-perturbation sigma point. The state differences between each propagated sigma point and the mean are then lifted back to the Lie algebra via $\lstate = \textsc{lift}(\statest, \sigmap)$, and the covariance is updated accordingly.

Upon receiving a measurement, $\meas \in \reals{k}$, new sigma points are generated in the error space, retracted to the manifold, and passed through the measurement function. If unit quaternion or unit dual quaternion measurements were used instead of Euclidean vectors, care would be required to compute innovation differences on the manifold instead. In our implementation, the Kalman gain is computed in the standard manner, the measurement update $\lstate_\star \in \reals{12}$ is calculated in the Lie algebra, and the dual quaternion state estimate is updated via
\begin{gather}
    \statest_k = \textsc{retract}(\statest_{k|k-1}, \lstate_\star).
\end{gather}


\section{Numerical Example} \label{sec:simulation}
The analytical observability conditions and utility of the DQ-UKF will now be demonstrated {via a numerical example}. Specifically, consider a visual target tracking scenario in which both the camera and target undergo acceleration relative to each other, and the camera estimates the target's relative pose and velocity states using only a pinhole camera measurement model. A loosely coupled framework {was assumed} and that the camera's pose and velocities with respect to the inertial frame were known. Pinhole camera measurements and \pnp pose estimates were obtained using OpenCV~\cite{2015opencv}, with the \textsc{SolvePnP} function employing the algorithm of~\cite{terzakis2020}. OpenCV pose estimates were used as a baseline for comparison against DQ-UKF results.

Previous works have thoroughly analyzed the sensitivity of memoryless deterministic \pnp solvers to measurement noise and feature point count~\cite{Zheng2013-of, Ferraz2014-yb}, and \cite{Mehralian2020-ns} extended this analysis to an EKF-based \pnp pose estimator, comparing its performance against traditional solvers. Hardware demonstrations have further validated the utility of \pnp filters in applied settings~\cite{Sveier2021-ev, Tweddle2015-jg, Filipe2015-ep}. The present results complement this body of work by quantitatively demonstrating the robustness of the DQ-UKF, and by extension a broader class of sequential visual pose estimators, to occlusions, showing that sequential estimators maintain lower estimation error compared to memoryless \pnp solvers.

The process model used in the filter was the relative dynamics from \eqref{eq:eom}. A zero-order hold {was assumed} on $\ddualomega{\target}{\inertial}{\target}$ and $\ddualomega{\camera}{\inertial}{\camera}$ over the state propagation time interval. This design choice was motivated by real-time implementation considerations: typically, these quantities vary slowly over the sample period, and repeated evaluation of the acceleration models during numerical integration introduces undesirable computational complexity. Errors introduced by the zero-order hold and post-propagation dual quaternion state normalization were compensated through the addition of a small amount of process noise applied to the state estimate covariance.

An overview of the 3-D trajectories of the camera and target frames with respect to the inertial frame are shown in Fig. \ref{fig:frames}. The trajectories are representative of a flyby or interception between the camera and target. A time history of the true relative pose and velocity states in the camera frame is shown in Fig. \ref{fig:truth}. The target was spinning and translating about its -Z axis at constant velocities, and the camera was experiencing a small force in the body frame +X and torque along +Y, resulting in a translation and rotation towards the approaching target.
\begin{figure}[t]
    \centering
    \includegraphics[width=\linewidth]{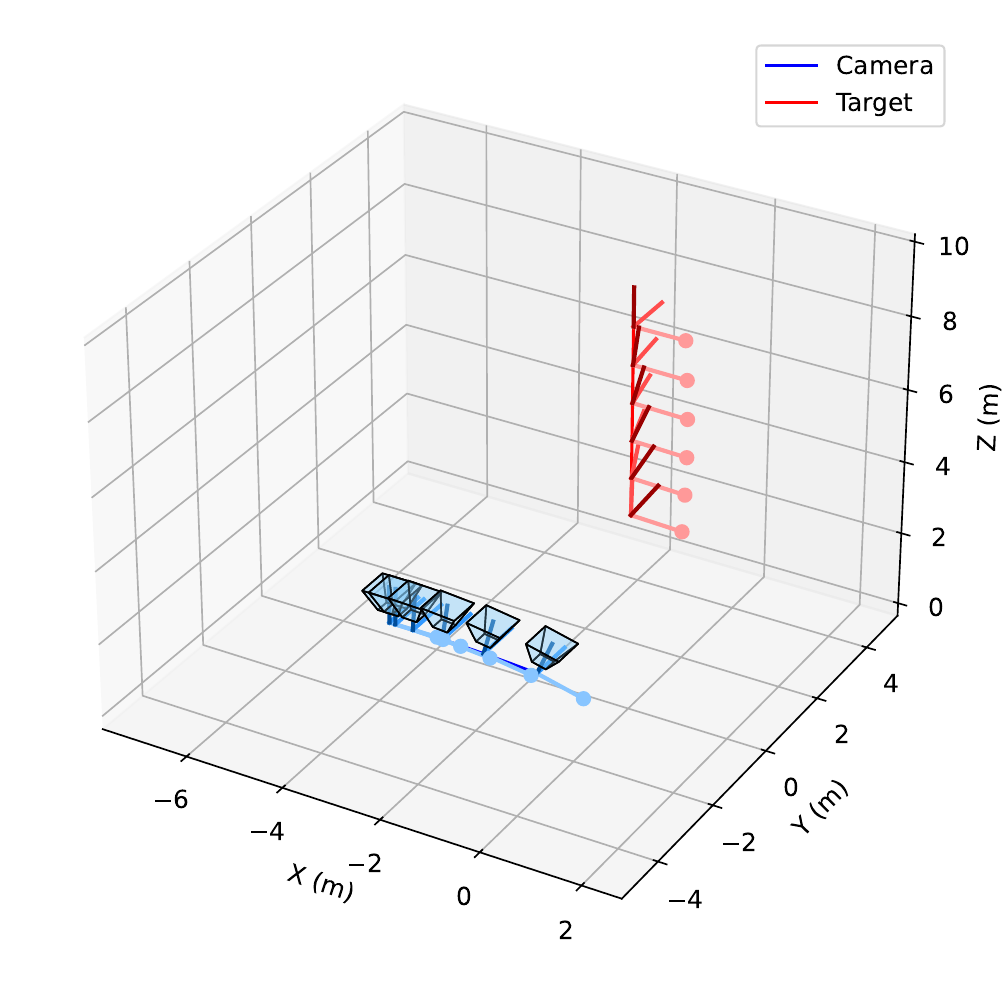}
    \caption{3-D plot showing the camera and target coordinate frame trajectories with respect to an inertial frame evolving and approaching each other over time. The +X-axis of each body fixed frame is denoted with a sphere on the end and the camera is affixed to the camera's \mbox{+Z-axis} and denoted with a frustum.}
    \label{fig:frames}
\end{figure}
\begin{figure}[t]
    \centering
    \includegraphics[width=\linewidth]{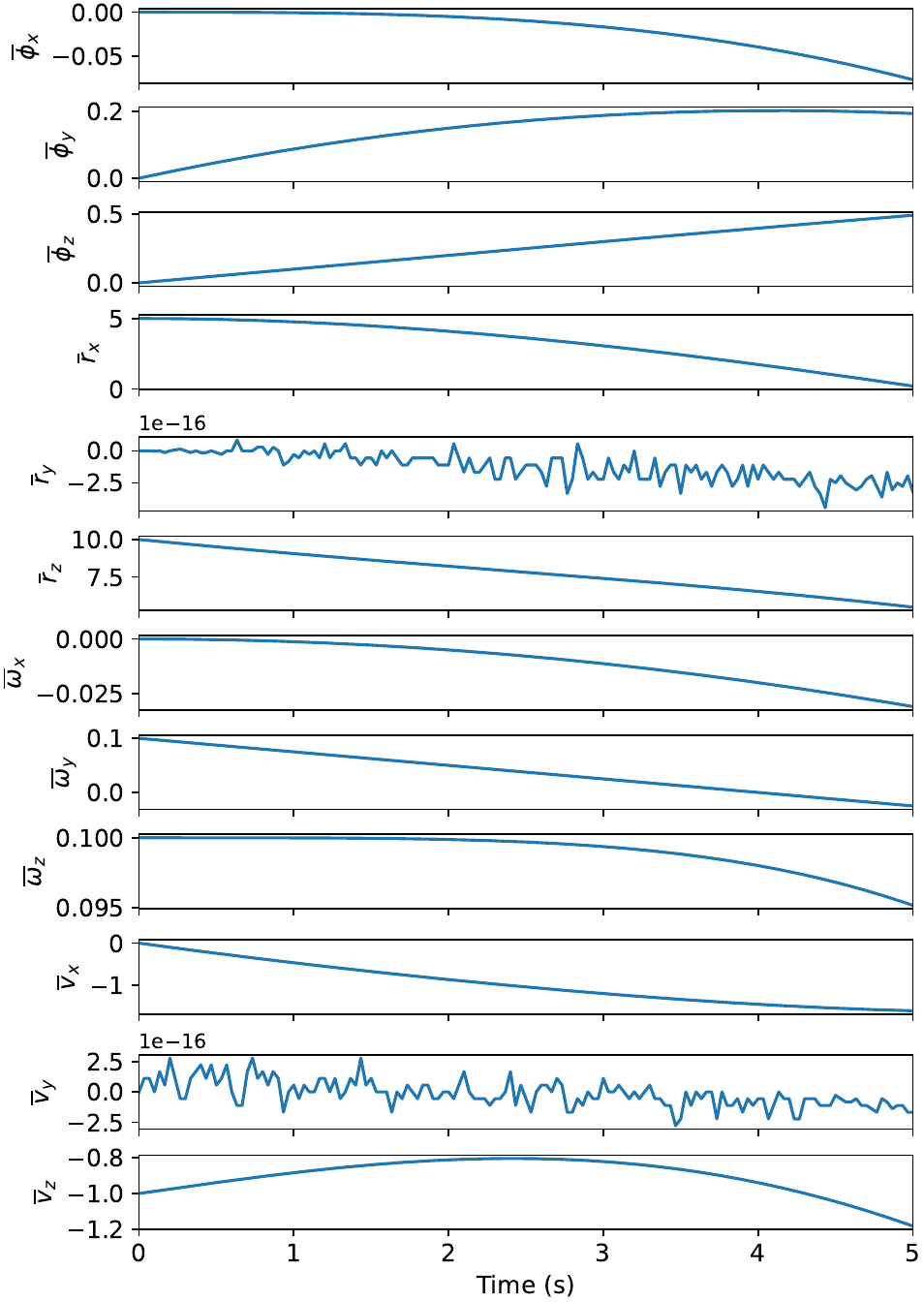}
    \caption{Time history of the truth pose and velocities of the target relative to camera in camera frame coordinates. All states are in base SI units.}
    \label{fig:truth}
\end{figure}

Five marker points were placed on the target's X-Y plane, one at the origin and the other four at the four corners of the box drawn by $(\pm1, \pm1)$. Using OpenCV, the feature points were projected through the pinhole camera model with the following camera intrinsic matrix
\begin{gather}
    \mathbf{K} = \mtx{800 & 0 & 640 \\ 0 & 800 & 512 \\ 0 & 0 & 1}.
\end{gather}
The measured features were then corrupted with zero-mean camera pixel sensor noise defined by the standard deviation $\sigma_{\meas}$. Since each projected feature returns a 2-D pixel measurement, the measurement covariance matrix for $m$ feature points is $\mathbf{R} = \sigma_{\meas}^2 \eye{2m}$. A small amount of process noise was applied to the velocity states to compensate for the zero-order hold of the camera and target's accelerations and to better replicate the qualitative estimator performance one might see in practice. The process noise covariance was the block diagonal matrix \mbox{$\mathbf{Q} = \Delta t \cdot \blkdiag{\sigma_{\overline{\phi}}^2 \eye{3}, \ \sigma_{\overline{r}}^2 \eye{3}, \ \sigma_{\overline{\omega}}^2 \eye{3}, \ \sigma_{\overline{v}}^2 \eye{3}}$}, where $\Delta t$ was the time step duration between filter updates. Process and measurement noise values are shown in Table \ref{table:cov_q}.

An initial pose estimate was obtained from measurements using OpenCV's \pnp solver, while the initial velocity estimate was randomly sampled from a normal distribution centered at the true velocity with covariance given by the initial estimate covariance in Table \ref{table:cov_p}. While a more sophisticated and realistic initialization method could be employed to extract velocity estimates from a sequence of \pnp solutions, it was not the focus of this work and we {opted} for a simpler, yet still representative, approach.
\begin{table}[t]
    \caption{Process and measurement noise standard deviations.}
    \label{table:cov_q}
    \centering
    \begin{tabular}{clc}
        \toprule
        \textbf{Variable} & \textbf{Description} & \textbf{Value} \\
        \midrule
        $\sigma_{\overline{\phi}}$ & Rotation vector (rad)            & $0$   \\
        $\sigma_{\overline{r}}$   & Position (m)                     & $0$   \\
        $\sigma_{\overline{\omega}}$ & Angular velocity (rad/s)      & $0.3$ \\
        $\sigma_{\overline{v}}$   & Velocity (m/s)                   & $0.3$ \\
        $\sigma_{\meas}$          & Pinhole camera measurement (px)  & $2$   \\
        \bottomrule
    \end{tabular}
\end{table}
\begin{table}[t]
    \caption{Initial state estimate standard deviations.}
    \label{table:cov_p}
    \centering
    \begin{tabular}{lc}
        \toprule
        \textbf{Description} & \textbf{Value} \\
        \midrule
        Rotation vector (rad)    & $0.2$ \\
        Position (m)             & $0.2$ \\
        Angular velocity (rad/s) & $0.1$ \\
        Velocity (m/s)           & $0.5$ \\
        \bottomrule
    \end{tabular}
\end{table}

The scenario was simulated over five seconds with measurements updated at 30 Hz. To investigate pose estimation performance under occlusions, the number of active markers {varied} over a 5-second sequence: four markers (0-1s), three markers (1-2s), two markers (2-3s), three markers (3-4s), and four markers (4-5s). DQ-UKF estimate errors with $\pm 3\sigma$ covariance bounds are shown in Fig. \ref{fig:ukf}. Figure \ref{fig:compare} compares the orientation and position estimate error magnitudes against OpenCV's \pnp solver as a baseline. A pose estimate was unavailable between 2-3s because the OpenCV \pnp solver requires a minimum of three points.
\begin{figure}[t]
    \centering
    \includegraphics[width=\linewidth]{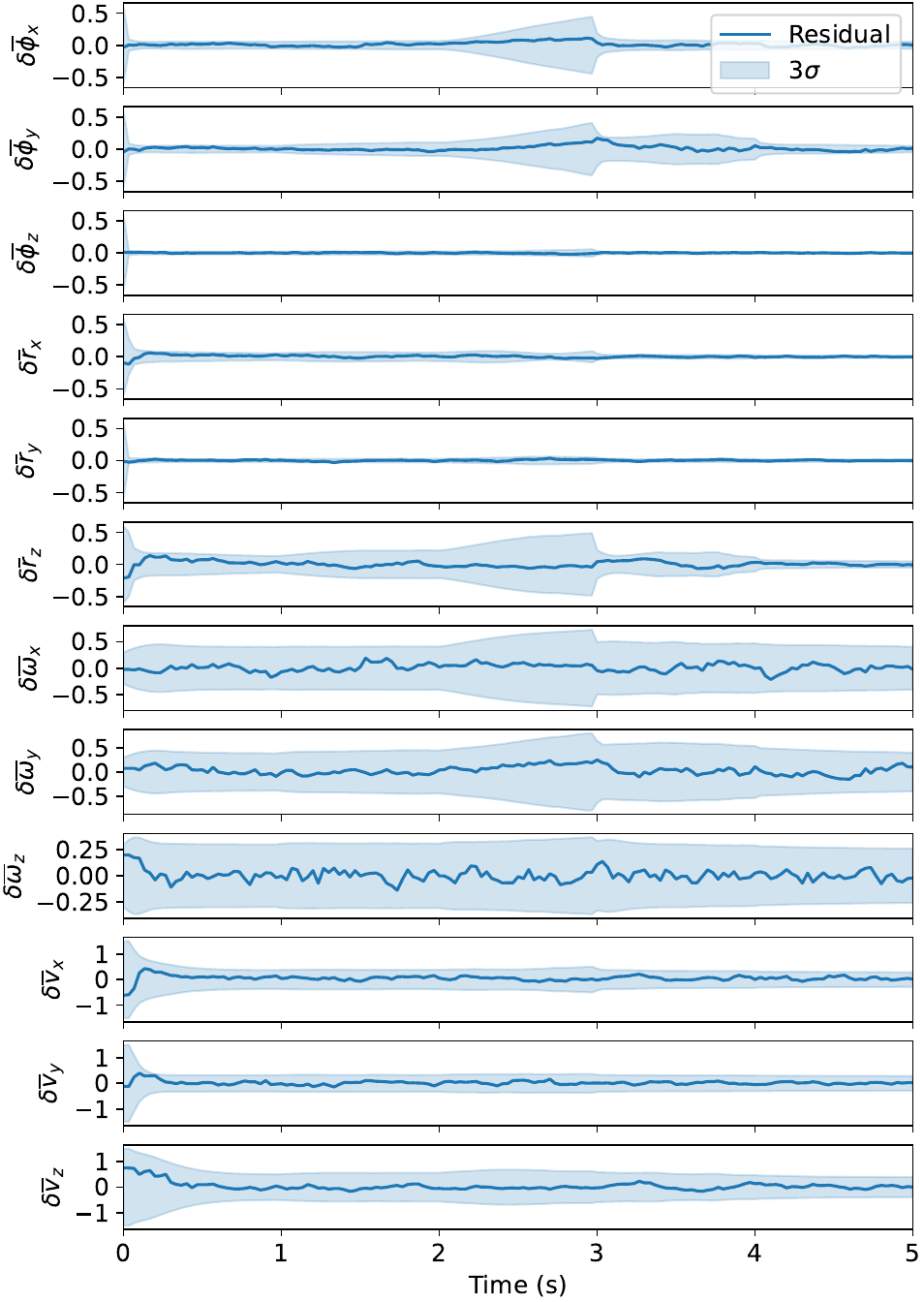}
    \caption{Time history of DQ-UKF estimate errors and estimate $3\sigma$ covariance bounds. All states are in base SI units.}
    \label{fig:ukf}
\end{figure}
\begin{figure}[t]
    \centering
    \includegraphics[width=\linewidth]{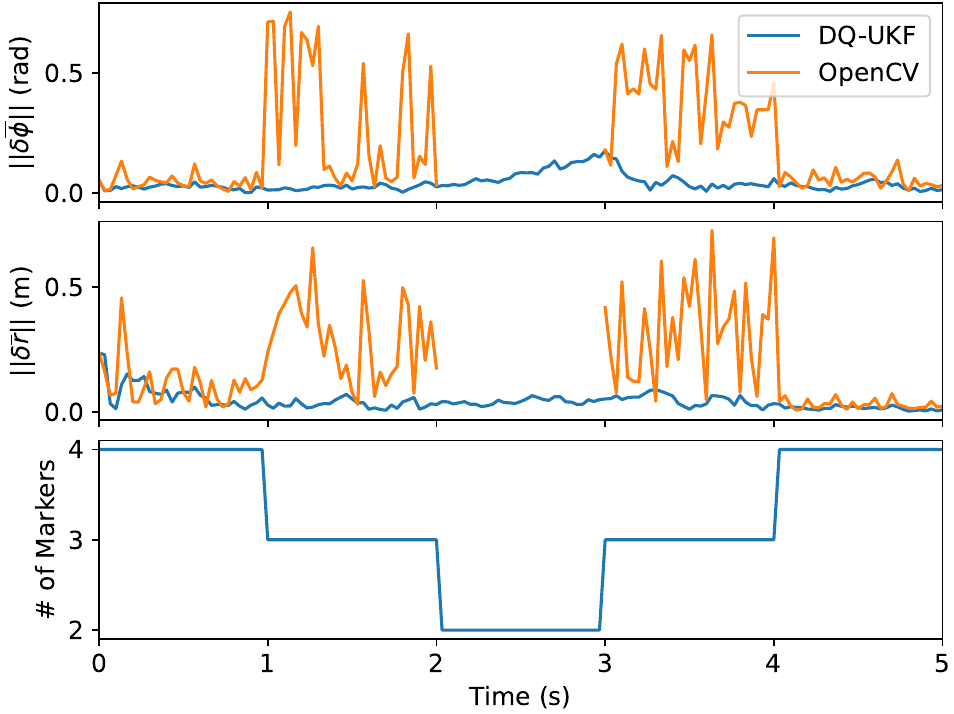}
    \caption{Pose estimation error comparison between DQ-UKF and OpenCV \pnp solver: orientation error (top), position error (middle), and number of measured markers (bottom). Occlusions are simulated by decreasing and increasing the number of markers used for pose estimation.}
    \label{fig:compare}
\end{figure}

Several key insights emerge from these simulation results. During the 2-3s interval when only two markers were visible, the system became unobservable according to Theorem \ref{thm:sys3}. This unobservability was evident in Fig. \ref{fig:ukf}, where the estimate covariance {increased} for some states. Interestingly, the covariance did not increase uniformly across all states, indicating that two markers still {provided} partial information about a subset of state variables depending on the geometry between the target and camera. This {result} is a subtle but important distinction: while the observability conditions from Theorem \ref{thm:sys3} apply to the entire state, fewer measurements can still render a subset of state variables observable, as illustrated by these results. Additionally, Fig. \ref{fig:compare} shows that the DQ-UKF {achieved} lower and smoother pose estimation errors on average compared to a memoryless \pnp solver across all simulated marker counts.

Both advantageous characteristics, smaller estimate error and graceful handling of unobservable periods, stem from incorporating a motion model and prior estimates. Unlike a memoryless \pnp solver, which cannot provide solutions with fewer than three markers, sequential state estimators, such as the DQ-UKF, maintain and propagate estimates under significant occlusions. These results demonstrate that \pnp estimators can achieve lower estimation error and greater robustness to occlusions when additional modeling and computational overhead can be incorporated.


\section{Conclusion} \label{sec:conclusion}
This article presented a comprehensive modeling, analysis, and filter design for dynamic 6-DOF target tracking using a dual quaternion representation. Specifically, the work presented here was motivated by visual pose and velocity tracking of a target. In Sections \ref{sec:position_analysis} and \ref{sec:proj_analysis}, sufficient conditions for local observability of the nonlinear relative motion dynamic system were derived using first relative position measurements and then relative unit vector measurements. Notably, the observability conditions found in Section \ref{sec:proj_analysis} are identical to the \ptp minimal conditions for a solution found in memoryless \ptp solvers, offering a novel control-theoretic connection between local observability and multiple \ptp solutions. These analyses further demonstrated the utility of a dual quaternion modeling framework for 6-DOF system analysis.

Following the observability analyses, a Lie algebraic UKF for pose and velocity estimation was formalized in a dual quaternion formulation and presented with accompanying pseudocode and code repository. Lastly, simulation results for a visual target tracking scenario with non-zero relative accelerations demonstrated that the DQ-UKF achieves superior tracking performance compared to a memoryless \pnp solver, particularly in the presence of occlusions and measurement noise, while having the additional benefit of maintaining estimate covariance for uncertainty quantification and decision making purposes.

This work identified several promising directions for future research. A fundamental open question concerns the optimal placement of visual markers and selection of feature points for pose estimation. While increasing the number of visual feature points generally improves pose estimates, the computational burden of detecting and correlating large feature sets grows substantially, with asymptotically diminishing returns in estimation accuracy. By incorporating physics-based models of relative motion, the proposed filter architecture may achieve comparable tracking performance with substantially reduced feature point requirements. Observability-based optimal sensor placement methods~\cite{andrews2025optimalfiducialmarkerplacement, 
Hinson2015-se, brace22, Smocot2026-da} have shown promise in addressing analogous problems in other domains and represent a natural control-theoretic direction for minimizing feature point requirements in visual pose estimation.
More broadly, we anticipate this work will facilitate broader adoption of dual quaternions for 6-DOF system modeling and analysis, catalyzing development of novel estimation, control, and planning algorithms that leverage their compact, singularity-free geometric properties.

\appendix
\label{sec:appendix}
The pseudocode below adheres to NumPy's zero-based indexing and bracket notation for arrays. Sigma point generation is based on the method in \cite{Wan2002-sr}. Python implementations of the pseudocode and Section \ref{sec:simulation} simulation code are available at \texttt{\detokenize{github.com/uwaa-ndcl/dqukf_pose_estimation}}.
\begin{algorithmic}[1]
\Function{normalize}{$\statest$}
    \State $\dualqblank{q}, \ \dualqblank{\omega} = \statest[0], \ \statest[1] \in \dquats$
    \State $\qreal{q}, \qdual{q} = \qreal{(\dualqblank{q})}, \qdual{(\dualqblank{q})} \in \quats$
    \State $\qreal{\omega}, \qdual{\omega} = \qreal{(\dualqblank{\omega})}, \qdual{(\dualqblank{\omega})} \in \quats$
    \State $\qreal{q} = \qreal{q} / \norm{\qreal{q}} \in \quatsu$
    \State $\qdual{q} = \qdual{q} - \qreal{q} \left(\qdual{q} \qdot \qreal{q} \right) \in \quats$
    \State $\qreal{\omega} = \mtx{0 & \overline{\omega}_r} \in \quatsv$
    \State $\qdual{\omega} = \mtx{0 & \overline{\omega}_d} \in \quatsv$
    \State $\dualqblank{q} = \qreal{q} + \dualunit \qdual{q} \in \dquatsu$
    \State $\dualqblank{\omega} = \qreal{\omega} + \dualunit \qdual{\omega} \in \dquatsv$
    \State \Return $\mtx{\dualqblank{q} & \dualqblank{\omega}} \in \dquatsu \times \dquatsv$
\EndFunction
\Statex

\Function{retract}{$\statest, \lstate$} \label{func:retract}
    \State $\dualqblank{q}, \ \dualqblank{\omega} = \statest[0], \ \statest[1] \in \dquatsu, \dquatsv$
    \State ${\overline{\phi}_\Delta, \overline{\qblank{r}}_\Delta, \overline{\qblank{\omega}}_\Delta, \overline{\qblank{v}}_\Delta} = \lstate[:3], \lstate[3:6], \lstate[6:9], \lstate[9:] \in \reals{3}$
    \State $\theta = \norm{\overline{\phi}_\Delta}_2 \in \reals{}$
    \State $\qblank{q}_\Delta = \mtx{\cos \left( \frac{\theta}{2} \right) & \frac{\overline{\phi}_\Delta}{\theta} \sin \left( \frac{\theta}{2} \right)} \in \quatsu$
    \State $\qblank{r}_\Delta = \mtx{0 & \overline{r}_\Delta} \in \quatsv$
    \State $\qblank{\omega}_\Delta = \mtx{0 & \overline{\omega}_\Delta} \in \quatsv$
    \State $\qblank{v}_\Delta = \mtx{0 & \overline{v}_\Delta} \in \quatsv$
    \State $\qblank{r} = 2 \qdual{(\dualqblank{q})} \qreal{(\dualqblank{q})} \in \quatsv$
    \State $\qblank{\omega} = \qreal{(\dualqblank{\omega})} \in \quatsv$
    \State $\qblank{v} = \qdual{(\dualqblank{\omega})} + \omega \cross r \in \quatsv$
    \State $\dualqblank{q}_+ = \dualqblank{q} \left( \qblank{q}_\Delta + \dualunit \tfrac{1}{2} \qblank{r}_\Delta \qblank{q}_\Delta \right) \in \dquatsu$
    \State $\dualqblank{\omega}_+ = (\qblank{\omega} + \qblank{\omega}_\Delta) + \dualunit (\qblank{v} + \qblank{v}_\Delta)$
    \Statex \hspace{\algorithmicindent} $\qquad - (\qblank{\omega} + \qblank{\omega}_\Delta) \cross (\qblank{r} + \qblank{r}_\Delta) \in \dquatsv$
    \State \Return $\mtx{\dualqblank{q}_+ & \dualqblank{\omega}_+} \in \dquatsu \times \dquatsv$
\EndFunction
\Statex

\Function{lift}{$\statest, \statest_+$}
    \State $\dualqblank{q}, \ \dualqblank{\omega} = \statest[0], \ \statest[1] \in \dquatsu, \dquatsv$
    \State $\dualqblank{q}_+, \ \dualqblank{\omega}_+ = \statest_+[0], \ \statest_+[1] \in \dquatsu, \dquatsv$
    \State $\qblank{r} = 2 \qdual{(\dualqblank{q})} \qreal{(\dualqblank{q})} \in \quatsv$
    \State $\qblank{r}_+ = 2 \qdual{(\dualqblank{q}_+)} \qreal{(\dualqblank{q}_+)} \in \quatsv$
    \State $\qblank{\omega} = \qreal{(\dualqblank{\omega})} \in \quatsv$
    \State $\qblank{\omega}_+ = \qreal{(\dualqblank{\omega}_+)} \in \quatsv$
    \State $\qblank{v} = \qdual{(\dualqblank{\omega})} + \omega \cross r \in \quatsv$
    \State $\qblank{v}_+ = \qdual{(\dualqblank{\omega}_+)} + \omega_+ \cross r_+ \in \quatsv$
    \State $\qblank{q}_\Delta = \qreal{(\qconj{\dualqblank{q}} \dualqblank{q}_+)} \in \quatsu$
    \State $\qscalar{q}, \qvec{q} = \qblank{q}_\Delta[0], \qblank{q}_\Delta[1:] \in \reals{}, \reals{3}$
    \State $\overline{\phi}_\Delta = 2 \arctan(\norm{\qvec{q}}_2, \qscalar{q}) \ \frac{\qvec{q}}{\norm{\qvec{q}}_2} \in \reals{3}$
    \State $\qblank{r}_\Delta = \qblank{r}_+ - \qblank{r} \in \quatsv$
    \State $\qblank{\omega}_\Delta = \qblank{\omega}_+ - \qblank{\omega} \in \quatsv$
    \State $\qblank{v}_\Delta = \qblank{v}_+ - \qblank{v} \in \quatsv$
    \State \Return $\mtx{\overline{\phi}_\Delta & \overline{\qblank{r}}_\Delta & \overline{\qblank{\omega}}_\Delta & \overline{\qblank{v}}_\Delta} \in \reals{12}$
\EndFunction

\Statex
\Statex
\Statex
\Statex
\Statex
\Class{DualQuaternionUKF}
    \State \textbf{Attributes:} State estimate $\statest \in \dquatsu \times \dquatsv$, estimate covariance $\cov \in \setpd{12}$, process model \mbox{$f: \dquatsu \times \dquatsv \rightarrow \dquatsu \times \dquatsv$}, measurement model $h: \dquatsu \times \dquatsv \rightarrow \reals{k}$, process noise $\mathbf{Q} \in \setpd{12}$, measurement noise $\mathbf{R} \in \setpd{k}$, sigma point parameters $\alpha, \beta, \kappa \in \reals{}$
    \Statex
    
    \Procedure{\_\_init\_\_}{$\statest_0, \cov_0, f, h, \mathbf{Q}, \mathbf{R}, \alpha, \beta, \kappa$}
        \State $\statest, \cov, f, h, \mathbf{Q}, \mathbf{R}, n = \statest_0, \cov_0, f, h, \mathbf{Q}, \mathbf{R}, 12$
        \State $\lambda = \alpha^2 (n + \kappa) - n$
        \State $\mathbf{w}_m[0] = \frac{\lambda}{n+\lambda}$
        \State $\mathbf{w}_c[0] = \frac{\lambda}{n+\lambda} + (1-\alpha^2+\beta)$
        \State $\mathbf{w}_m[i] = \mathbf{w}_c[i] = \frac{1}{2(n+\lambda)}, \; i=1,\dots,2n$
    \EndProcedure
    \Statex
    \Procedure{sigma\_points}{}
        \State $\mathbf{S} = \sqrt{(n+\lambda) \cov}$
        \State $\lstate[0] = \zeros{n}{}$
        \For{$i=1$ to $n$}
            \State $\lstate[i] = \mathbf{S}[i-1]$
            \State $\lstate[i+n] = - \mathbf{S}[i-1]$
        \EndFor
        \State \Return $\lstate$
    \EndProcedure
    \Statex
    \Procedure{predict}{}
        \State $\lstate = \Call{sigma\_points}{}$
        \For{$i=0$ to $2n$}
            \State $\sigmap[i] = \textsc{retract}(\statest, \lstate[i])$
            \State $\sigmap[i] = f(\sigmap[i])$
            \State $\sigmap[i] = \textsc{normalize}(\sigmap[i])$
            \State $\lstate[i] = \textsc{lift}(\sigmap[0], \sigmap[i])$
        \EndFor
        \State $\statest = \sigmap[0]$
        \State $\cov = \sum_{i=0}^{2n} \mathbf{w}_c[i] \lstate[i] \lstate[i]^\top + \mathbf{Q}$
    \EndProcedure
    \Statex
    \Procedure{update}{$\meas_k$}
        \State $\lstate = \Call{sigma\_points}{}$
        \For{$i=0$ to $2n$}
            \State $\sigmap[i] = \textsc{retract}(\statest, \lstate[i])$
            \State $\bm{\gamma}[i] = h(\sigmap[i])$
        \EndFor
        \State $\hat{\meas}_k = \sum_{i=0}^{2n} \mathbf{w}_m[i] \bm{\gamma}[i]$
        \State $\mathbf{S}_k = \sum_{i=0}^{2n} \mathbf{w}_c[i](\bm{\gamma}[i] - \hat{\meas}_k)(\bm{\gamma}[i] - \hat{\meas}_k)^\top + \mathbf{R}$
        \State $\mathbf{P}_{xy} = \sum_{i=0}^{2n} \mathbf{w}_c[i] \lstate[i] (\bm{\gamma}[i] - \hat{\meas}_k)^\top$
        \State $\mathbf{K}_k = \mathbf{P}_{xy} \mathbf{S}_k^{-1}$
        \State $\lstate_\star = \mathbf{K}_k(\meas_k - \hat{\meas}_k)$
        \State $\statest = \textsc{retract}(\statest, \lstate_\star)$
        \State $\cov = \cov - \mathbf{K}_k \mathbf{S}_k \mathbf{K}_k^\top$
    \EndProcedure
\EndClass
\end{algorithmic}

\newpage
\bibliographystyle{IEEEtran}
\bibliography{IEEEabrv, references}

\end{document}